%% file: arXiv-version.tex
\documentclass[english]{lipics-v2021}

\usepackage[T1]{fontenc}
\usepackage[utf8]{inputenc}
\usepackage{microtype}
\usepackage{amsmath,amssymb,amsthm}
\usepackage{mathtools}
\usepackage{todonotes}
\input{config}

\title{Polyhedral Methods for Cooperative Games: Small Lifts and Hard Faces}

\titlerunning{Polyhedral Methods for Cooperative Games}

\author{Michel Grabisch}{Charles University, Faculty of Mathematics and Physics, Department of Applied Mathematics, Czechia \\ Universit\'e Paris 1 Panth\'eon-Sorbonne, Centre d'Economie de la Sorbonne, France \and \url{https://sites.google.com/site/michelgrabisch/}} {michel.grabisch@univ-paris1.fr}{https://orcid.org/0000-0002-3283-1496}{Partially supported by the AGATE project funded  from the Horizon Europe Programme under Grant Agreement No. 101183743.}

\author{Hans Raj {Tiwary}}{Charles University, Faculty of Mathematics and Physics, Department of Applied Mathematics, Czechia \and \url{https://kam.mff.cuni.cz/~hansraj/}}{hansraj@kam.mff.cuni.cz}{https://orcid.org/0000-0003-1903-1600}{Partially supported by the AGATE project funded  from the Horizon Europe Programme under Grant Agreement No. 101183743.}
\authorrunning{M. Grabisch, H. R. Tiwary}

\Copyright{Michel Grabisch, Hans Raj Tiwary}
\ccsdesc[500]{Theory of computation~Computational geometry}
\ccsdesc[500]{Mathematics of computing~Discrete mathematics}
\ccsdesc[300]{Theory of computation~Problems, reductions and completeness}
\keywords{extension complexity, cooperative games, pseudo-Boolean functions, core, supermodularity, correlation polytope}

\category{}
\relatedversion{}
\supplement{}

\nolinenumbers

\theoremstyle{plain}
\theoremstyle{definition}

\begin{document}

\maketitle
\begin{abstract}
We study the computational complexity of fundamental algorithmic problems — membership testing, separation, valid-inequality testing, and linear optimization — over polytopes and cones arising from cooperative games (also known as pseudo-Boolean functions).  A central obstacle in the study of such problems is that a general cooperative game on
$n$ players requires $2^n$ values, so the input size is $2^n$ for a game with $n$ players, making these computational tasks theoretically trivial. Restricting to $k$-additive games reduces
the input size to $O(n^k)$, making such games a natural target
for meaningful questions about the existence of efficient algorithms.

On the positive side, we give an explicit extended formulation of size $O(n^k)$ for the core of $k$-additive $k$-monotone games, allowing all four problems to be solved by a single polynomial-size linear program — in particular, circumventing the ellipsoid method that is needed when building from earlier tractability results of Deng and Papadimitriou, or of Edmonds. For the cone of $k$-additive $(k{-}1)$-monotone games, we give a complete characterization of its extreme rays and derive the same
$O(n^k)$ bound on extension complexity, yielding a geometry-based proof and generalization of a result of Billionnet and Minoux.

On the negative side, we show that for $l \leq k-2$ the cone of
$k$-additive $l$-monotone games is computationally intractable: membership testing is not in NP (unless NP\,=\,coNP), valid-inequality testing is NP-complete, and extension complexity is at least $1.5^n$. Our hardness reduction works by identifying, through a sequence of facial operations on the dual cone, a copy of the correlation polytope — a canonical hard $0/1$ polytope for which the same problems are known to be intractable. Our hardness results yield, as a special case, a result of Crama and of Gallo and Simone. Furthermore, our hardness results also explain the lack of any good characterization of the extreme rays of the cone of $k$-additive $(k{-}2)$-monotone games.  

Together, the results draw a sharp algorithmic boundary within the
$k$-additive family: tractability holds exactly when the monotonicity order is at least $k{-}1$, and hardness sets in at order $k{-}2$ and below. 
\end{abstract}

\section{Introduction}
Cooperative games with transferable utility -- also known as pseudo-Boolean functions -- assign a real value $v(S)$ to
every subset $S$ of players. This is useful in settings where one wishes to capture, for example, the benefit of collective action.
In the study of cooperative games, some central algorithmic tasks are to reason about a set of so called stable solutions: the \emph{core} $C(N,v)$ of a game $v$ with set of players $N$, and about testing structural constraints such as supermodularity and its higher-order generalizations. One can phrase these questions in polyhedral terms. The core is by definition a polytope and testing submodularity, or more generally $k$-monotonicity, is equivalent to membership testing in the set of all such games, which is known to be a pointed polyhedral cone. 

These polyhedra are described by exponentially many inequalities in general. However, these inequalities are known, and since a general cooperative game is described by $2^n$ arbitrary values, the computational tasks at hand can be performed ``efficiently''. 
However, instead of the values for individual sets, one may represent a game also by its so-called M\"obius coefficients.
A widely used restriction on cooperative games, $k$-additivity,
limits the Möbius coefficients of subsets to size at most $k$, the rest being zero, cutting the description to $O(n^k)$ parameters.

For fixed $k$, the question becomes: which computational problems over the associated core and monotonicity cones admit polynomial-time algorithms and which are NP-hard?  Since the input size is no longer exponential in the number of players, these questions become nontrivial.

Previous work has addressed membership in the core \cite{DP94}, or checking supermodularity for $k$-additive games for $k\leqslant 4$ \cite{BM85, Crama1989, GS89}, but the results are either limited in scope or critically depend on the Ellipsoid method, which performs poorly in practice.

\subsection{Our approach.}
For a polyhedral object having an algorithm for membership testing, validity testing of an inequality, separation, or optimization typically gives an algorithm for all four but at the cost of critical usage of Ellipsoid algorithm, thus making it unappealing in practice. However, if one can provide a polynomial sized extended formulation using reasonably small coefficients, then all the above problems can be solved using more efficient methods such as the simplex algorithm, which works quite well in practice, or various interior point methods that can be quite effective.

Thus, we study the problems of checking $l$-monotonicity, or emptiness testing of the core, through the lens of polyhedral computation, and motivate our work by two questions: 
\begin{enumerate}
\item When do the core and $l$-monotonicity cones admit polynomial-size extended formulations, and when they do, can one be written down explicitly?
\item When they do not, can one prove hardness, for example, by embedding a canonical hard polytope into the facial structure of the dual?
\end{enumerate}

We answer the first question partially and the second completely by identifying three main regimes determined by
the gap between the additivity order $k$ and the monotonicity order $l$.

In particular, we prove the following results:

\begin{enumerate}
\item \textbf{Core of $k$-additive $k$-monotone games: $C(N,v)$.}
      We give an explicit polytope $Q_k(N,v)$ of size $O(n^k)$ that projects onto the core $C(N,v)$.  Consequently, membership testing, valid inequality testing, separation, and linear optimization over the core can all be solved by a linear program of size $O(n^k)$, without the use of
     the ellipsoid method that follows from the existing results of Deng and Papadimitriou \cite{DP94} or of Edmonds \cite{Edmonds1970Submodular}.

\item \textbf{Cone of $k$-additive $(k{-}1)$-monotone games: $\SM{k}{k-1}(n)$.}
      We give a complete description of the extreme rays of $\SM{k}{k-1}(n)$
      and deduce $\xc{\SM{k}{k-1}(n)} = O(n^k)$.  As a corollary, we obtain a geometry-based proof and generalization of a result of Billionnet and Minoux for $k=3$ \cite{BM85}.

\item \textbf{Cone of $k$-additive $l$-monotone games, $l \leq k-2$: $\SM{k}{l}(n)$.}
      We show that for every pair $(k,l)$ with $2 \leq l \leq k-2$:
      \begin{itemize}
        \item Membership testing in $\SM{k}{l}(n)$ is not in NP unless NP\,=\,coNP;
        \item Valid inequality testing over $\SM{k}{l}(n)$ is NP-complete;
        \item Separation over $\SM{k}{l}(n)$ is NP-hard;
        \item The extension complexity of ${\SM{k}{l}(n)}$ is at least $1.5^{n-l}$.
      \end{itemize}
       These results strengthen and generalize a result of Crama and of Gallo and Simeone \cite{Crama1989, GS89} for $k=4, l=2$.
\end{enumerate}

Taken together, results (2)--(3) establish a sharp dichotomy: the
computational problems are tractable (in a strong, LP-based sense) when the monotonicity order is at least $k-1$, and become NP-hard as soon as the order drops to $k-2$ or below. 


\subsection{Related work}

The study of algorithmic problems over the core and supermodular games has a
long history.  Because supermodular games are equivalent to submodular
functions via negation, linear optimization over the core of $v$ corresponds to linear optimization over the base polytope of the extended polymatroid of $-v$ \cite{schrijver2003combinatorial, fujishige05}. Thus the greedy algorithm solves linear optimization over
the core in polynomial time \cite{Edmonds1970Submodular}.  Via the optimization-separation
equivalence \cite{GLS1988} this yields polynomial-time membership testing, valid inequality testing, and separation, but the reductions go through the ellipsoid method and are impractical.

For $k$-additive $k$-monotone games with $k \geq 2$, Deng and Papadimitriou gave a polynomial-time membership-testing algorithm via a max-flow reduction \cite{DP94}.  Through the optimization-separation equivalence this also yields polynomial-time optimization, separation, and valid inequality testing, again with the caveat that the equivalence relies on the ellipsoid method.  Our result (1) above gives the first explicit polynomial-size LP for this setting, removing that dependence entirely.

A polynomial-time algorithm for testing whether a $3$-additive game is $2$-monotone, was given by Billionnet and Minoux in \cite{BM85}. However, the result uses specific features of $3$-additive $2$-monotone games that does not generalize to $k$-additive $(k{-1})$-monotone games. The hardness direction was initiated
by the NP-hardness results for checking $2$-monotonicity, also called supermodularity, of $4$-additive games by Crama \cite{Crama1989} who proved weak NP-hardness and by Gallo and Simeone \cite{GS89} who proved strong NP-hardness. Again the results do not hold for testing $(k{-1})$-monotonicity of $k$-additive games. Our result (3) gives a uniform hardness result for all $l \leq
k-2$ and all $k\geqslant 4$.

Extension complexity has been used extensively in Combinatorial Optimization \cite{Kaibel2011ExtendedFormulations, Wolsey2011UsingExtendedFormulations} but appears to be new as a tool for studying cooperative game polytopes.  The correlation polytope and its connection to the cut polytope, along with the hardness results we use, are due to Pitowsky \cite{Pitowsky91}, De Simone \cite{DESIMONE199071}, and Karp \cite{Karp72}; exponential lower bounds on its extension complexity appear in \cite{FioriniMPTW15, KaibelW15}.

If one can give a superpolynomial lower bound on the extension complexity, then one can rule out a large class of LP based methods. A superpolynomial lower bound on extension complexity does not rule out polynomial time algorithm for linear optimization over the polytope \cite{Rothvoss17}, but often -- and this is indeed the case in our work -- one can identify a computationally hard polytope embedded in the polytope at hand. Such ``facial embedding'' have been used to prove extension complexity lower bounds for many polytopes relying on the bounds for the correlation polytope \cite{AvisT15, FioriniMPTW15}. In all existing results of this latter type, existing NP-hardness proofs were exploited to embed the correlation polytope into a target polytope    . In this paper, we do the opposite: namely, we employ such a facial embedding to derive computational hardness.

\section{Preliminaries}
\subsection{Polyhedra and Computation}\label{subsec:problems}
For polyhedron $P\subseteq\BR^d$ we will be interested in four natural computational problems:

\begin{problem}[SMEM]\label{probmem}
    Given $y\in\BR^d$ is $y\in P$?
\end{problem}

\begin{problem}[SVAL]\label{probval}
    Given $\alpha\in\BR^d, \beta\in\BR,$ is $\alpha^\intercal x\leqslant \beta$ valid for $P$?    
\end{problem}

\begin{problem}[SSEP]\label{probsep}
    Given $y\in\BR^d$ output YES if $y\in P$, otherwise give an inequality $\alpha^\intercal x\leqslant \beta$ valid for $P$ such that $\alpha^\intercal y>\beta.$
\end{problem}

\begin{problem}[SOPT]\label{probopt}
    Given $c\in\BR^d$ find $y\in P$ that maximizes $c^\intercal x$. Output UNBOUNDED if no such point exists.
\end{problem}

For well-described polyhedra, that is, polyhedra for which an upper bound on the number of bits required to define any facet and any vertex are known, together with a point inside the polyhedra and some inner and outer balls are known, the four problems: SMEM, SVAL, SSEP, and SOPT, are polynomially time equivalent to each other \cite{GLS1988}. However, the proof of equivalence uses the ellipsoid algorithm and so having an extended formulation of polynomial size is beneficial because one can avoid the use of the ellipsoid algorithm. All the polyhedra considered by us are well-described so we will omit any mention of these extra assumptions in the following and assume equivalence of these problems.

While a polynomial time algorithm for one of the above problems gives polynomial time algorithm for all, slight care should be taken in translating hardness results for one into another. For example, for a polyhedron whose inequalities are known SMEM is in coNP while SVAL is in NP whereas for a polyhedron whose vertices and extreme rays are known, SMEM is in NP while SVAL is in coNP. So a typical (co)NP-completeness results needs to be appropriately modified. SSEP and SOPT are not decision problems so only NP-hardness can be shown for them.


\subsubsection{Polar Duality}\label{subsec:duality}
    Let $P\subseteq\BR^d$. The \emph{polar dual} of $P$ -- denoted by $P^\polar$ -- is defined to be the set $\{y~|~y^\intercal x \leqslant 1, ~\forall x\in P\}.$

If $P$ is closed and convex, and contains the origin then $(P^\polar)^\polar=P.$

Let $P$ be a polyhedron containing the origin. That is, suppose $P=\conv{V}+\cone{W}=\{x~|~ Ax\leqslant 1, Bx\leqslant 0\}$ then, $P^\polar=\conv{A \cup \{0\}}+\cone{B}=\{x~|~ Vx\leqslant 1, Wx\leqslant 0\}.$ Here $\conv{A}$ for a matrix $A$ represents the convex hull of the set of rows of $A$ treated as vectors. Similarly  $\cone{B}$ is the conic hull of the set of row vectors of $B$.

\begin{theorem}\label{thm:prob_equiv}
    Let $P$ be a well-described polyhedron containing the origin. The problems SMEM, SVAL, SSEP, SOPT over a polytope $P$ are equivalent to the problem SVAL, SMEM, SOPT, SSEP over the polar dual $P^\polar$.    
\end{theorem}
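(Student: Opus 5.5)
The plan is to derive everything from the bipolar identity $(P^\polar)^\polar = P$, valid since $P$ is a closed convex polyhedron containing the origin, together with the vertex/facet swap described just before the statement. We pair SMEM over $P$ with SVAL over $P^\polar$, and SSEP over $P$ with SOPT over $P^\polar$. By symmetry (apply the same argument to $P^\polar$, which is again a well-described polyhedron containing the origin), this gives the other two pairings too.

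\textbf{SMEM versus SVAL.} For $y\in\BR^d$ we have $y\in P=(P^\polar)^\polar$ if and only if $y^\intercal z\leqslant 1$ for every $z\in P^\polar$. That is exactly the SVAL question over $P^\polar$ with $\alpha=y$ and $\beta=1$. Conversely, an SVAL instance $\alpha^\intercal x\leqslant\beta$ over $P^\polar$ is handled by cases on $\beta$.
\begin{itemize}
\item If $\beta>0$, it is equivalent to $\alpha/\beta\in P$.
\item If $\beta=0$, validity means $\alpha$ lies in the recession cone of $P$. This is equivalent to $t\alpha\in P$ for all $t\geqslant 0$. Since $P$ is a polyhedron containing $0$, it suffices to test $t\alpha\in P$ for one sufficiently large $t$, with encoding length polynomial in the facet complexity bound.
\item If $\beta<0$, the inequality fails at $0\in P^\polar$, so it is never valid.
\end{itemize}

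\textbf{SSEP versus SOPT.} Given $y$, solve SOPT over $P^\polar$ with objective $c=y$.
\begin{itemize}
\item If the answer is UNBOUNDED, the oracle returns (or we can extract) a ray $w$ of $P^\polar$ with $y^\intercal w>0$. Since $w^\intercal x\leqslant 0$ is valid for $P$, this is a separating inequality.
\item If the optimum $z^*$ satisfies $y^\intercal z^*\leqslant 1$, then $y\in P$.
\item Otherwise $z^{*\intercal}x\leqslant 1$ is valid for $P$, because $z^*\in P^\polar$, and it is violated by $y$.
\end{itemize}
For the converse, SOPT over $P^\polar$ with objective $c$ is obtained from SSEP over $P$ by a search on $t$, testing whether $tc\in P$. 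Equivalently, the value $\max_{z\in P^\polar}c^\intercal z$ equals the gauge $\inf\{\lambda>0 : c\in\lambda P\}$. Binary search on $\lambda$, combined with the well-described bit bounds and rounding to a vertex, recovers an optimal vertex. Alternatively, one invokes the GLS equivalence \cite{GLS1988} directly on each side.

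\textbf{Main obstacle.} The delicate step is this converse direction, together with the unbounded and degenerate cases ($\beta=0$, recession directions, and $0$ lying on the boundary of $P$). Exact reductions there need the encoding-length bounds of well-described polyhedra to justify finite binary search and rational rounding. I would first try to cite \cite{GLS1988} for these rounding details rather than reprove them.
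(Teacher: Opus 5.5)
Your proposal is correct and follows essentially the paper's route: the paper's proof likewise observes that SMEM over $P$ is SVAL over $P^\polar$ (via $(P^\polar)^\polar=P$), notes that $P^\polar$ is again well-described, and then invokes the equivalence of \cite{GLS1988} on each side to obtain the remaining pairings. Your $\beta$-case analysis and direct SSEP/SOPT pairing add detail the paper omits, but the GLS citation still carries the rounding and converse directions in both arguments. One caveat: the ray-extraction step in the UNBOUNDED case needs GLS's strong optimization, since the paper's SOPT only reports UNBOUNDED and returns no ray.
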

\begin{proof}
    The problem SMEM over $P$ is equivalent to SVAL over $P^\polar$. If $P$ is well-described then so is $P^\polar$. Thus by equivalence of the four problems over well-described polyhedra with a known feasible point \cite{GLS1988}, we have the desired statement.
\end{proof}

We would like to remark that for a pointed polyhedral cone, generally speaking, the problem SOPT is trivial and Theorem \ref{thm:prob_equiv} can be restated by omitting problem \ref{probopt}.

\subsubsection{Extended Formulations}\label{subsec:xf}
Given a polytope $P\subset\mathbb{R}^d$ a polytope $Q\subset\mathbb{R}^{d+r}$ is called an \emph{extended formulation} of $P$ if $P$ is a projection of $Q$. That is, $P=\pi_x(Q)$ where $$\pi_x(Q)=\{x\in\mathbb{R}^d~|~\exists y\in\mathbb{R}^r, (x^\intercal,y^\intercal)^\intercal \in Q\}.$$ 

It may happen that $P$ has exponentially many vertices and facets while an extended formulation of $P$ has polynomially many facets \cite{Kaibel2011ExtendedFormulations, Wolsey2011UsingExtendedFormulations, ConfortiCZ13}. The \emph{extension complexity} of a polytope $P$ -- denoted by $\textrm{xc}(P)$ -- is the minimum number of inequalities required to describe any extended formulation of $P$.


We would like to point out that these notions can be generalized in straightforward way to arbitrary polyhedra.

Having an extended formulation for a polyhedron helps solve the problems SMEM, SVAL, SSEP, and SOPT as follows:

Let $Q=\{(x,y)~|~Ex+Fy\leqslant g\}$ be an extended formulation of $P=\{x~|~Ax\leqslant b\}.$ 

\subsubsection*{Membership testing (SMEM)}
One can check whether $\alpha\in P$ by checking whether the system $Fy\leqslant g-E\alpha$ is feasible or not. If $Q$ is described by polynomially many inequalities then this boils down to solving a polynomial sized linear program.

\subsubsection*{Valid Inequality testing (SVAL)}
An inequality $\alpha^\intercal x\leqslant\beta$ is valid for $P$ if and only if it is valid for $Q$. By duality of Linear Programming $\alpha^\intercal x\leqslant\beta$ is valid for $Q$ if and only if there are non-negative multipliers $\lambda$ such that $\lambda^\intercal E=\alpha, \lambda^\intercal F=0, \lambda^\intercal g\leqslant\beta.$ Thus it can be checked using a linear program of polynomial size.

\subsubsection*{Separation (SSEP)}
Given a point $\alpha$ we see that $\alpha\in P$ if and only if there exists $y$ such that $(\alpha,y)\in Q.$ So if $\alpha\notin P$ then the system $Fy\leqslant g-E\alpha$ is infeasible. By Farkas' Lemma this is equivalent to existence of $\zeta$ such that $\zeta^\intercal F=0, \zeta\geqslant 0, \zeta^\intercal (g-E\alpha) < 0$. On the other hand for any $x\in P$ the system $Fy\leqslant g-Ex$ is feasible. So $\zeta^\intercal (g-Ex) \geqslant 0$. Thus $\zeta^\intercal E x \leqslant \zeta^\intercal g$ is valid for $P$ but does not contain $\alpha$ giving the required separating hyperplane. This hyperplane can be found by solving the linear program $F^\intercal z=0, z\geqslant 0, (g-E\alpha)^\intercal z = -1$.

\subsubsection*{Optimization (SOPT)}
$x^*$ is an optimal solution for the objective function $c^\intercal x$ over $P$ if and only if there exists $y^*$ such that $(x^*,y^*)$ is an optimal solution of $c^\intercal x$ over $Q$. Thus linear optimization over $P$ can be done by linear optimization over $Q$. If $Q$ is defined by a polynomial number of inequalities, then this amounts to solving a linear program of polynomial size.

\subsubsection{(Generalized) Correlation polytope}\label{subsec:cor}
The correlation polytope were introduced by Pitowsky \cite{Pitowsky91} in the context of modeling probabilities that can arise in classical physics. The correlation polytope $\cor(n)$ is defined as $$\cor(n)=\conv{\left\{\left.x\in\{0,1\}^{n+\binom{n}{2}}~\right|~x_{i,j}=x_i\cdot x_j\right\}}.$$

Sometimes the correlation polytope is defined as the convex hull of $xx^\intercal$ for all binary vectors $x$ \cite{FioriniMPTW15}. However, one can see that both definitions are affinely equivalent, with the latter definition containing coordinates $x_{ij}$ as well as $x_{ji}$ of equal value. It is known that the correlation polytope is affinely equivalent to the cut polytope \cite{DESIMONE199071, dezalaurentbook}.

We will use the following known results about $\cor(n)$:
\begin{enumerate}
    \item SMEM for $\cor(n)$ is NP-complete \cite{Pitowsky91}.
    \item SVAL for $\cor(n)$ is not in NP unless NP=coNP \cite{Pitowsky91}.
    \item SOPT for $\cor(n)$ is NP-hard \cite{Karp72, dezalaurentbook}
    \item The extension complexity of $\cor(n)$ is at least $1.5^n$ \cite{FioriniMPTW15, KaibelW15}.
\end{enumerate}

Pitowsky's notion of correlation polytope can be generalized beyond pairwise interactions in a straightforward way by considering higher order correlations. That is, for $k\geqslant 2$, one can define the generalized correlation polyotpe $\cor^k(n)$ as 

$$\displaystyle\cor^k(n)=\conv{\left\{\left.x\in\{0,1\}^{n+\binom{n}{2}+\cdots+\binom{n}{k}}~\right|~x_{K}=\prod_{i\in K}x_i, \quad\forall K\subset N, |K|\leqslant k\right\}}.$$

\begin{obs}\label{obs:gcor}
The vertices of $\cor^k(n)$ are precisely the vectors $w^S \in \BR^{n+\binom{n}{2}+\cdots+\binom{n}{k}}$ for each $S\subseteq N$ such that
\begin{itemize}
    \item $w^S_T=\begin{cases}
        1, & \text{ for } T\subseteq S, 1\leqslant |T|\leqslant k\\
        0, & \text{ otw }
    \end{cases}$.
\end{itemize}
\end{obs}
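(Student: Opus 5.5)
The plan is to identify the integral points in the defining set of $\cor^k(n)$, and then argue that each of them is a vertex of the convex hull. Two things are needed: (a) the generating set is exactly $\{w^S : S\subseteq N\}$, and (b) no point of that set is a convex combination of the others.

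\emph{Step (a).} Take $x\in\{0,1\}^{n+\binom{n}{2}+\cdots+\binom{n}{k}}$ with $x_K=\prod_{i\in K}x_i$ for all $1\leqslant|K|\leqslant k$. Put $S=\{i\in N : x_i=1\}$, which uses only the singleton coordinates. Since a product of $0/1$ values equals $1$ exactly when every factor is $1$, we get $x_T=1$ iff $T\subseteq S$, for every $T$ with $1\leqslant|T|\leqslant k$. Hence $x=w^S$. Conversely, each $w^S$ is a $0/1$ vector that satisfies the product relations, by the same equivalence. The map $S\mapsto w^S$ is injective, because the singleton coordinates of $w^S$ are the indicator vector of $S$. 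So $\cor^k(n)=\conv{\{w^S : S\subseteq N\}}$ with $2^n$ distinct generators.

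\emph{Step (b).} It is a standard fact that every $0/1$ vector is a vertex of the convex hull of any set of $0/1$ vectors containing it. To see this directly, fix $S$ and take the linear functional $c^\intercal x=\sum_{T:\,w^S_T=1}x_T-\sum_{T:\,w^S_T=0}x_T$. On any $0/1$ vector $y$ this equals $|\{T: w^S_T=1\}|$ minus the number of coordinates where $y$ and $w^S$ differ. It is therefore uniquely maximized over $\{0,1\}^{\cdots}$ at $y=w^S$. Thus $w^S$ is the unique maximizer over $\cor^k(n)$ and hence a vertex. Since every vertex of a convex hull of a finite set belongs to that set, the vertices are exactly the $w^S$.

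No step here is a real obstacle. The only point needing care is making explicit that the singleton coordinates determine the whole vector, which is the content of the product condition. The same argument shows that the vertices of $\cor(n)=\cor^2(n)$ are in bijection with subsets of $N$.
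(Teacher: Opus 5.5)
Your proof is correct and complete. The paper states this as an observation without proof, treating it as immediate from the definition. Your two steps supply exactly the reasoning it leaves implicit: the product condition forces every admissible $0/1$ point to be $w^S$ for $S$ read off the singleton coordinates, and each point of a finite $0/1$ set is the unique maximizer of a linear functional over its convex hull, hence a vertex.
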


 Since we will be interested in these generalized correlation polytopes only in the context of establishing NP-hardness, we can use the equivalence between SMEM, SVAL, SSEP, and SOPT for rational polyhedra \cite{GLS1988} even at the cost of using Ellipsoid algorithm. Optimizing over the generalized correlation polytope $\cor^k(n)$ for any fixed $k$ is clearly NP-hard because optimizing over it solves the optimization problem for $\cor(n)$ (cf. Subsection \ref{subsec:xf}) and optimizing over $\cor(n)$ is NP-hard because the max-cut problem is NP-hard \cite{Karp72, dezalaurentbook}. In particular we will use the following results for $\cor^k(n)$ for $k\geqslant 2$ that follow easily from those for $\cor(n)$:
 \begin{enumerate}
    \item SMEM for $\cor^k(n)$ is NP-complete.
    \item SVAL for $\cor^k(n)$ is not in NP unless NP=coNP.
    \item SOPT for $\cor^k(n)$ is NP-hard.
    \item The extension complexity of $\cor^k(n)$ is at least $1.5^n$.
\end{enumerate}

\subsection{Cooperative games}\label{subsec:games}
Let $N:=\{1,\ldots,n\}$ be a finite set of players. A subset of $N$ is called a {\it coalition}, and $2^N$ denotes the set of all coalitions.
A {\it cooperative game with transferable utility} (hereafter abbreviated by {\it game}) (see, e.g., \cite{pesu03,gra16}) is a pair $(N,v)$ where $v:2^N\rightarrow \BR$ is a set function over $N$ such that $v(\varnothing)=0$. The quantity $v(S)$ for some coalition $S$ represents the benefit (or cost reduction) brought by the cooperation of all players in $S$.
When no confusion arises, we will often write $v$ instead of $(N,v)$.

\subsubsection{The Core of a game}
Let $x\in \BR^N$. Usually, $x$ is interpreted as a payment vector, where $x_i$ is the payment for player $i$. For any coalition $S$, we denote by $x(S)$ the total payment given to the coalition $S$, i.e., $x(S)=\sum_{i\in S}x_i$. A payment vector is {\it efficient} if $x(N)=v(N)$, i.e., it redistributes the total benefit earned by cooperation of all players.

For any game $(N,v)$, the {\it core} is the set of efficient payment vectors $x\in \BR^N$ such that every coalition $S$ receives at least its own worth $v(S)$:
\[
C(N,v) = \{x\in\BR^N~\mid~ x(S)\geqslant v(S), \forall S\subseteq N, x(N)=v(N)\}.
\]
If no confusion arises we write $C(v)$ for $C(N,v)$. The core may be empty, meaning that there is no possibility to find a payment satisfying all coalitions, and if not, it is a convex closed polyhedron of dimension at most $n-1$, with at most $2^n-2$ facets.
In general, the vertices of the core are not known, although they are known for some families of games \cite{grsu16}. Their maximal number is $n!$ \cite{deku02}

It is well known that supermodular games have a nonempty core, and their vertices are known to be the so-called marginal vectors \cite{Edmonds1970Submodular,sha71}. For a game $v$ and a permutation $\sigma$ on $N$, the associated {\it marginal vector} $x^{v,\sigma}$ is defined by
\begin{equation}\label{eq:mave}
x^{v,\sigma}_i = v(\{j~\mid~ \sigma(j)\leqslant \sigma(i)\}) - v(\{j~\mid~ \sigma(j)<\sigma(i)\}),\quad \forall i\in N,
\end{equation}
where $\sigma(i)$ is the rank of player $i$ in the order induced by $\sigma$.

\subsubsection{The M\"obius coefficients}
Given a game $v$, its {\it M\"obius transform} \cite{rot64} or {\it Harsanyi dividend} \cite{har63} is a set function $m^v$ over $N$ defined by
\[
m^v(S) = \sum_{T\subseteq S}(-1)^{|S\setminus T|}v(T), \quad \forall S\subseteq N.
\]
The inverse relation is given by
\begin{equation}\label{eq:mob}
v(S)  = \sum_{T\subseteq S}m^v(T), \quad \forall S\subseteq N.
\end{equation}

\subsubsection{$k$-additive games}
A game is said to be {\it additive} if for every disjoint coalitions $S,T$ it holds
\[
v(S\cup T)=v(S)+v(T).
\]
Observe that this is equivalent to have for any coalition $S$
\[
v(S) = \sum_{i\in S}v(\{i\}).
\]
A game $v$ is said to be (at most) {\it $k$-additive} for some $1\leqslant k\leqslant n$ \cite{gra96f} if its M\"obius transform vanishes for subsets of more than $k$ elements:
\[
S\subseteq N, |S|>k \Rightarrow m^v(S)=0.
\]

It is easy to see from (\ref{eq:mob}) that $v$ is additive iff $v$ is 1-additive. Using (\ref{eq:mob}) again, one can easily show that a 2-additive game $v$ satisfies
\begin{equation}\label{eq:2add}
v(S) = \sum_{\{i,j\}\subseteq S}v(\{i,j\}) - (s-2)\sum_{i\in S}v(\{i\}),
\end{equation}
with the convention that $|S|=:s$. Observe that a 2-additive game needs only $n+\frac{n(n-1)}{2}=\frac{n(n+1)}{2}$ coefficients to be defined. By contrast, an arbitrary game needs $2^n-1$ coefficients to be defined. More generally, a $k$-additive game needs $\sum_{l=1}^k\binom{n}{l}=\Oh{n^k}$ coefficients to be defined, which is polynomial for fixed $k$.

\subsubsection{$l$-monotone games}
A game $v$ is said to be {\it $2$-monotone} or {\it supermodular} or {\it convex} if it satisfies
\[
v(S\cup T)+v(S\cap T) \geqslant v(S) + v(T), \quad \forall S,T\in 2^N.
\]

One can express the supermodularity condition using the M\"obius transform \cite[p.53]{gra16}: $v$ is supermodular iff 
\begin{equation}\label{eq:supermod}
\sum_{S\subseteq T}m^v(\{i,j\}\cup S)\geqslant 0, \quad\forall \{i,j\}\subseteq N, \forall T\subseteq N\setminus\{i,j\}.    
\end{equation}
Observe that for 2-additive games, the above condition just reduces to $m(\{i,j\})\geqslant 0$ for all $\{i,j\}\subseteq N$. 

Similarily one can express $l$-monotonicity for any $l\geqslant 2$ using the M\"obius transform \cite[p.53]{gra16}. A game is $l$-monotone for any $l\geqslant 2$ if and only if
\begin{equation}\label{eq:k-monotone}
\sum_{A\subseteq L\subseteq B}m^v(L)\geqslant 0, \quad\forall A,B\subseteq N, A\subseteq B,2\leqslant |A|\leqslant l.
\end{equation}
Observe that $l$-monotonicity implies that $m^v(L)\geqslant 0$ for all $L
\subseteq N$ such that $2\leqslant |L|\leqslant l$. 

\section{The core of \(k\)-additive \(k\)-monotone games}\label{sec:core}


For $l\geqslant 2$, $l$-monotonicity is a property stronger that supermodularity. That is, every $l$-monotone game for any $l\geqslant 2$ is supermodular. Therefore the vertices of the core are described by permutations of $[n].$ Let $\sigma$ be a permutation, then the core has a corresponding vertex $x^\sigma$ defined as $x^\sigma_i=v(\{j:\sigma(j)\leqslant \sigma(i)\})-v(\{j:\sigma(j)<\sigma(i)\})$. Furthermore every vertex corresponds to a permutation.

Given a game $(N,v)$ and positive integer $k$ define the polytope $Q_k(N,v)$ defined by the following equations and inequalities:
            \begin{align}
                &x_i=\sum_{\substack{\emptyset\neq J\not\ni i\\ |J|\leqslant k-1}}m^v(\{i\}\cup J)y_{i,J}+m^v(\{i\}), & i\in N\label{eq:kl-core-1}\\
                &y_{i,J}\geqslant 0, & \emptyset\neq J\subset N\setminus\{i\}, {|J|\leqslant k-1},i\in N\label{eq:kl-core-2}\\
                &\sum_{i\in T}y_{i, T\setminus\{i\}}=1& T\subseteq[N], 2\leqslant |T|\leqslant k\label{eq:kl-core-3}
            \end{align}
\begin{theorem}\label{thm:corexc}
    Let $(N,v)$ be a $k$-additive $k$-monotone game. Then $C(N,v)=\Pi_x(Q_k(N,v)),$ that is, the core of the game $(N,v)$ is a projection of $Q_k(N,v)$ onto the $x$ coordinates.    
\end{theorem}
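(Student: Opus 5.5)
We prove the two inclusions separately, using that a $k$-additive $k$-monotone game is supermodular, so by the discussion preceding the theorem the core is the convex hull of the marginal vectors $x^\sigma$. The idea is that $y_{i,J}$ is a fractional version of the indicator ``$i$ comes last among $\{i\}\cup J$''. Constraint \eqref{eq:kl-core-3} then says that each set $T$ with $2\leqslant|T|\leqslant k$ has exactly one last element. Constraint \eqref{eq:kl-core-1} says that player $i$ is paid the Möbius mass of the sets in which it is last.

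\emph{Inclusion $C(N,v)\subseteq\Pi_x(Q_k(N,v))$.} Since $\Pi_x(Q_k)$ is convex (projection of a polytope), it suffices to lift each vertex $x^\sigma$. Fix $\sigma$ and put $y_{i,J}=1$ if $\sigma(i)>\sigma(j)$ for all $j\in J$, and $y_{i,J}=0$ otherwise. Then \eqref{eq:kl-core-2} clearly holds. For each $T$, exactly one $i\in T$ is $\sigma$-last in $T$, which gives \eqref{eq:kl-core-3}. For \eqref{eq:kl-core-1}, write $P_i=\{j:\sigma(j)<\sigma(i)\}$. By \eqref{eq:mob} and $k$-additivity,
\[
x^\sigma_i=v(P_i\cup\{i\})-v(P_i)=\sum_{L\subseteq P_i}m^v(\{i\}\cup L)=m^v(\{i\})+\sum_{\substack{\emptyset\ne J\subseteq P_i\\|J|\leqslant k-1}}m^v(\{i\}\cup J).
\]
This is exactly the right-hand side of \eqref{eq:kl-core-1} for our choice of $y$.

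\emph{Inclusion $\Pi_x(Q_k(N,v))\subseteq C(N,v)$.} Take $(x,y)\in Q_k(N,v)$. To show $x$ is efficient, sum \eqref{eq:kl-core-1} over $i\in N$ and regroup by $T=\{i\}\cup J$:
\[
\sum_i x_i=\sum_i m^v(\{i\})+\sum_{2\leqslant|T|\leqslant k} m^v(T)\sum_{i\in T}y_{i,T\setminus\{i\}}=\sum_{|T|\leqslant k}m^v(T)=v(N),
\]
using \eqref{eq:kl-core-3} and $k$-additivity. For coalitional rationality, fix $S$ and restrict the sum to $i\in S$. We split the pairs $(i,J)$ with $i\in S$ into those with $J\subseteq S$ and those with $J\not\subseteq S$. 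The pairs with $J\subseteq S$ regroup, again by \eqref{eq:kl-core-3}, to $\sum_{T\subseteq S}m^v(T)=v(S)$. The remaining pairs contribute
\[
\sum_{i\in S}\ \sum_{J\not\subseteq S} m^v(\{i\}\cup J)\,y_{i,J}.
\]
It remains to show this remainder is nonnegative.

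\emph{Main obstacle: nonnegativity of the remainder.} This is where $k$-monotonicity enters, via the observation after \eqref{eq:k-monotone}: a $k$-monotone game has $m^v(L)\geqslant 0$ for all $2\leqslant|L|\leqslant k$. Every set $\{i\}\cup J$ in the remainder has $|J|\geqslant 1$ and size at most $k$, so its Möbius coefficient is nonnegative. Combined with $y_{i,J}\geqslant0$ from \eqref{eq:kl-core-2}, each term is nonnegative, hence $x(S)\geqslant v(S)$. This is the step I expect to be the crux. The argument relies essentially on $l=k$: for smaller $l$, some Möbius coefficients of size between $l+1$ and $k$ could be negative, and the remainder would no longer be sign-controlled. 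Hence I would check carefully that exactly these coefficients, and no others, appear in the remainder.
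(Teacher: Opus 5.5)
Your proposal is correct and follows the paper's proof essentially step for step. You lift each marginal vector $x^\sigma$ by taking $y_{i,J}$ to be the indicator that $i$ comes $\sigma$-last in $\{i\}\cup J$, and you obtain $x(N)=v(N)$ and $x(S)\geqslant v(S)$ by regrouping via (\ref{eq:kl-core-3}), using $m^v(\{i\}\cup J)\geqslant 0$ and $y_{i,J}\geqslant 0$ on the leftover terms. You also make explicit two points the paper leaves implicit: that the lifted points satisfy (\ref{eq:kl-core-3}), and that convexity of the projection reduces the first inclusion to the vertices.
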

\begin{proof}
    Let $x^\sigma$ be a vertex of $C(N,v)$ corresponding to some permutation $\sigma$ of $[N].$ Define 
    \[y^\sigma_{i,J}:=\begin{cases}
        1, & \text{if } \sigma(J)<\sigma (i)\\
        0, & \text{otherwise}
    \end{cases},\]
    where $\sigma(J)<\sigma(i)$ means that $\sigma(j)<\sigma(i),~ \forall j\in J.$ 
    
    By using (\ref{eq:kl-core-1}), we get
    \[
    x^\sigma_i=\sum_{\substack{\emptyset\neq J\not\ni i\\|J|\leqslant k-1}}m^v(\{i\}\cup J)y^\sigma_{i,J}+m^v(\{i\}), \forall i\in N
    \]
    and so $(x^\sigma,y^\sigma)\in Q_k(N,v).$ This proves that $C(N,v)\subseteq\Pi_x(Q_k(N,v)).$

    For the converse containment $\Pi_x(Q_k(N,v))\subseteq C(N,v)$ we will show that all the inequalities defining $C(N,v)$ are implied by the inequalities of $Q_k(N,v).$

    Let us first consider the equality $x(N)=v(N).$ We have 
    \[\begin{aligned}
        x(N)=&\sum_{i\in N}x_i\\
        =&\sum_{i\in N}\left(\sum_{\substack{\emptyset\neq J\not\ni i\\|J|\leqslant k-1}}m^v(\{i\}\cup J)y_{i,J}+m^v(\{i\})\right)\\
        =&\sum_{2\leqslant |K| \leqslant k}m^v(K)\left(\sum_{i\in K}y_{i,K\setminus\{i\}}\right)+\sum_{i\in[N]}m^v(\{i\})\\
        =&\sum_{\substack{T\subseteq[N]\\|T|\leqslant k}}m^v(T)\\
        =&v(N).
    \end{aligned}\]

    Now consider any subset $S\subseteq N$ and the corresponding core inequality $x(S)\geq v(S).$

    We have
    \[\begin{aligned}
        x(S)=&\sum_{i\in S}x_i\\
        =&\sum_{i\in S}\left(\sum_{\substack{\emptyset\neq J\not\ni i\\|J|\leqslant k-1}}m^v(\{i\}\cup J)y_{i,J}+m^v(\{i\})\right)\\
        =&\sum_{\substack{K\subseteq S\\2\leqslant|K|\leqslant k}}m^v(K)\left(\sum_{i\in K}y_{i,K\setminus\{i\}}\right)+\sum_{\substack{i\in S\\\emptyset\neq J\not\subseteq S\\J\not\ni i,|J|\leqslant k-1}}m^v(\{i\}\cup J)y_{i,J}+\sum_{i\in S}m^v(\{i\})\\
        =&\sum_{\substack{K\subseteq S\\2\leqslant|K|\leqslant k}}m^v(K)+\sum_{i\in S}m^v(\{i\})+\sum_{\substack{i\in S\\\emptyset\neq J\not\subseteq S\\J\not\ni i,|J|\leqslant k-1}}m^v(\{i\}\cup J)y_{i,J}\\
        =&v(S)+\sum_{\substack{i\in S\\\emptyset\neq J\not\subseteq S\\J\not\ni i,|J|\leqslant k-1}}m^v(\{i\}\cup J)y_{i,J}
    \end{aligned}\]

Since $v$ is $k$-monotone and $k$-additive, it follows that $m^v(\{i\}\cup J)\geqslant 0$ in every term of the above sum.    
Therefore $x(S)-v(S)\geqslant 0$ is obtained from non-negative combination of $y_{i,J}\geqslant 0$ for $i\in S, J\not\subseteq S, J\not\ni i$ and the equalities defining $Q_k(N,v).$
\end{proof}

The extended formulation thus obtained is explicitly constructed and yields the following algorithmic corollary.

\begin{corollary}
    Let $(N,v)$ be a $k$-additive $k$-monotone game. Then, there exists a Linear Program with $O(n^k)$ variables and constraints for SMEM, SVAL, SSEP, and SOPT over $\core(N,v)$.
\end{corollary}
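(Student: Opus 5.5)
The plan is to read the corollary directly off Theorem \ref{thm:corexc}, combined with the general recipes of Subsection \ref{subsec:xf} for solving SMEM, SVAL, SSEP and SOPT from an extended formulation.

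\textbf{Step 1: size of $Q_k(N,v)$.} By Theorem \ref{thm:corexc}, $\core(N,v)=\Pi_x(Q_k(N,v))$, so it suffices to bound the size of the system (\ref{eq:kl-core-1})--(\ref{eq:kl-core-3}).

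The $y$-variables are indexed by pairs $(i,J)$ with $i\notin J$ and $1\leqslant |J|\leqslant k-1$. Their number is $\sum_{t=2}^{k} t\binom{n}{t}=O(n^k)$ for fixed $k$. Adding the $n$ variables $x_i$ keeps the total at $O(n^k)$.

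The constraints are as follows.
\begin{itemize}
\item There are $n$ equations of type (\ref{eq:kl-core-1}).
\item There is one nonnegativity constraint (\ref{eq:kl-core-2}) per $y$-variable, giving $O(n^k)$ of them.
\item There is one equation (\ref{eq:kl-core-3}) per $T$ with $2\leqslant |T|\leqslant k$, giving $O(n^k)$ of them.
\end{itemize}

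The coefficients are either $0/1$ or Möbius values $m^v(K)$ with $|K|\leqslant k$. These are exactly the $O(n^k)$ numbers describing the $k$-additive game, so the encoding length stays polynomial in the input.

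\textbf{Step 2: the four problems.} Instantiate the reductions of Subsection \ref{subsec:xf} with this extended formulation.
\begin{itemize}
\item \emph{SMEM.} Given $\alpha$, fixing $x=\alpha$ in (\ref{eq:kl-core-1}) yields a linear feasibility system in the $y$-variables.
\item \emph{SVAL.} Testing $c^\intercal x\geqslant \beta$ reduces to the LP-duality multiplier system. Equivalently, one minimizes $c^\intercal x$ over $Q_k(N,v)$.
\item \emph{SSEP.} A Farkas certificate for infeasibility of the SMEM system is obtained from an LP with one variable per constraint of $Q_k(N,v)$, hence again of size $O(n^k)$. As in Subsection \ref{subsec:xf}, it yields a valid inequality for $\core(N,v)$ that is violated by $\alpha$.
\item \emph{SOPT.} One optimizes the objective directly over $Q_k(N,v)$. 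Since $y$ is bounded (each $y_{i,J}\in[0,1]$ by (\ref{eq:kl-core-2}) and (\ref{eq:kl-core-3})), $Q_k(N,v)$ is a polytope and no unboundedness can occur.
\end{itemize}

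There is no genuine obstacle, since the heavy lifting is Theorem \ref{thm:corexc}. The only point needing care is the SSEP step: one must check that the dual certificate LP also has $O(n^k)$ variables and constraints. This holds because its variables are in bijection with the constraints of $Q_k(N,v)$ and its constraints with the $y$-variables (plus one normalization).
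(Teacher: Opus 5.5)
Your proposal is correct and follows the paper's route. The paper simply reads the corollary off Theorem \ref{thm:corexc} together with the generic extended-formulation recipes of Subsection \ref{subsec:xf}. Your explicit count ($\sum_{t=2}^{k} t\binom{n}{t}$ variables $y_{i,J}$ plus $O(n^k)$ constraints) and your check that each $y_{i,J}\in[0,1]$, so that $Q_k(N,v)$ is a polytope, fill in details the paper leaves implicit.
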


This point is important conceptually. Earlier tractability results can be interpreted through separation or oracle-based optimization, whereas our formulation gives direct access to the core as a small explicit lifted polytope. In particular, the ellipsoid method is not needed once the lift is available.

\section{The cone of $k$-additive $l$-monotone games}\label{sec:k-l-games}
We will denote by $\binom{X}{k}$ the set of all subsets of a set $X$ that have cardinality $k$. For natural numbers $2\leqslant l\leqslant k$, let us denote the set of $k$-additive $l$-monotone games on $N=[n]$ by $\SM{k}{l}(n).$ 
This is a pointed cone given by the inequalities (\ref{eq:k-monotone}):
\[
\sum_{\substack{A\subseteq L\subseteq B}} m^v(L) \geqslant 0, \quad A\subseteq B\subseteq N,2\leqslant |A|\leqslant l.
\]

Since we are interested in $k$-additive games, the M\"obius coefficients vanish for $|L|>k.$ Furthermore many of the inequalities are redundant as we show now.

\begin{theorem}\label{thm:klcone}
    Let $2\leqslant l\leqslant k$ be natural numbers. The cone $\SM{k}{l}(n)$ is defined by the following inequalities:
    \begin{eqnarray}
        m^v(A') \geqslant 0, & A'\subseteq N, 2\leqslant |A'|\leqslant l-1 \label{eq:kl_trivial}\\
        \sum_{\substack{T\subseteq B'\\0\leqslant|T|\leqslant k-l}} m^v(A'\cup T)\geqslant 0, & A'\in\binom{N}{l}, B'\subseteq N\setminus A'\label{eq:kl_nontrivial}
    \end{eqnarray}
\end{theorem}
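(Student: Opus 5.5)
The plan is to prove two containments. First, every inequality in (\ref{eq:kl_trivial})--(\ref{eq:kl_nontrivial}) is one of the defining inequalities (\ref{eq:k-monotone}) specialized to $k$-additive games. Second, conversely, every inequality of (\ref{eq:k-monotone}) is a nonnegative combination of inequalities from (\ref{eq:kl_trivial})--(\ref{eq:kl_nontrivial}). Throughout I use $k$-additivity to drop all terms $m^v(L)$ with $|L|>k$. Hence the inequality indexed by $A\subseteq B$ reads $\sum_{A\subseteq L\subseteq B,\,|L|\leqslant k} m^v(L)\geqslant 0$.

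For the first containment, take $A=B=A'$ with $2\leqslant|A'|\leqslant l-1$. The sum then has the single term $m^v(A')$, which gives (\ref{eq:kl_trivial}). For (\ref{eq:kl_nontrivial}), take $A=A'$ with $|A'|=l$ and $B=A'\cup B'$. Writing $L=A'\cup T$ with $T\subseteq B'$, the constraint $|L|\leqslant k$ becomes $|T|\leqslant k-l$, which is exactly (\ref{eq:kl_nontrivial}). So the cone defined by the new system contains $\SM{k}{l}(n)$.

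For the converse, fix $A\subseteq B$ with $2\leqslant |A|\leqslant l$. If $|A|=l$, the inequality is literally (\ref{eq:kl_nontrivial}) with $A'=A$ and $B'=B\setminus A$. If $|A|<l$, split the sum into the terms with $|L|\leqslant l-1$ and the terms with $l\leqslant |L|\leqslant k$. Each term of the first kind satisfies $m^v(L)\geqslant 0$ by (\ref{eq:kl_trivial}), because $|L|\geqslant|A|\geqslant 2$. The main obstacle is the second kind of terms. A naive averaging over all $l$-sets $A'$ with $A\subseteq A'\subseteq L$ counts each $L$ with multiplicity $\binom{|L|-|A|}{l-|A|}$, which depends on $|L|$, so it does not reproduce the sum.

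Instead I will partition these terms using a fixed linear order on $B\setminus A$. For each $L$ with $A\subseteq L\subseteq B$ and $|L|\geqslant l$, let $A'(L)$ consist of $A$ together with the first $l-|A|$ elements of $L\setminus A$ in this order. For a given $l$-set $A'$ with $A\subseteq A'\subseteq B$, let $B'(A')$ be the set of elements of $B\setminus A$ that come strictly after $\max(A'\setminus A)$. Then the sets $L$ with $A'(L)=A'$ are exactly the sets $A'\cup T$ with $T\subseteq B'(A')$ and $|T|\leqslant k-l$. The reason is that the remaining elements of $L$ must all come after the chosen ones, and $|L|\leqslant k$. Hence the sum of the terms with $|L|\geqslant l$ equals $\sum_{A'}\sum_{T\subseteq B'(A'),\,|T|\leqslant k-l} m^v(A'\cup T)$. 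This is a sum of left-hand sides of (\ref{eq:kl_nontrivial}), one for each pair $(A',B'(A'))$, and it is valid because $B'(A')\subseteq N\setminus A'$. If $|B|<l$, there are no terms of the second kind at all. Adding the two parts expresses the original inequality as a nonnegative (indeed $0/1$) combination of (\ref{eq:kl_trivial}) and (\ref{eq:kl_nontrivial}), which completes the proof.
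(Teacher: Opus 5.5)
Your proof is correct and is essentially the paper's argument in closed form. The paper inducts on $(|A|,|B|-|A|)$, splitting $\sum_{A\subseteq L\subseteq B}m^v(L)$ according to whether a chosen $i\in B\setminus A$ lies in $L$. If you unroll that recursion taking $i$ to be the smallest remaining element of $B\setminus A$ each time, the leaves are exactly your terms $m^v(L)\geqslant 0$ for $|L|\leqslant l-1$ and the inequalities (\ref{eq:kl_nontrivial}) indexed by the pairs $(A',B'(A'))$.
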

\begin{proof}
   We show that every inequality in (\ref{eq:k-monotone}) is either of type (\ref{eq:kl_trivial}) or ($\ref{eq:kl_nontrivial}$), or it can be obtained by combining these inequalities. Consider a type (\ref{eq:k-monotone}) inequality with sets $A\subseteq B.$ Clearly $|A|\leqslant k$ otherwise $m^v(L)=0$ for all $A\subseteq L\subseteq B$ and the inequality becomes trivial $0\leqslant 0$. 
    
    We will prove by induction on $(|A|,|B|-|A|)$ that $\sum_{A\subseteq L\subseteq B}m^v(L)\geqslant 0$ follows from the inequalities in our list. 

    For the base case, for $|A|=l$ the inequality is a type (\ref{eq:kl_nontrivial}) inequality and for $|B|-|A|=0$ the inequality is a type (\ref{eq:kl_trivial}) inequality.
    
    If $2 \leqslant |A|< l$ and $|B|>|A|$, let $i\in B\setminus A.$ Then, we have 
    \[\begin{aligned}
        &\sum_{A\subseteq L\subseteq B} m^v(L)\\
        =&\sum_{\substack{A\subseteq L\subseteq B\\i\in L}}m^v(L) + \sum_{\substack{A\subseteq L\subseteq B\\i\notin L}}m^v(L) \\
        =&\sum_{\substack{A\cup\{i\}\subseteq L\subseteq B}}m^v(L) + \sum_{\substack{A\subseteq L\subseteq B\setminus\{i\}}}m^v(L)       
    \end{aligned}\]

    Thus, the inequality $\sum_{A\subseteq L\subseteq B}m^v(L)\geqslant 0$ is obtained by adding  $\sum_{\substack{A\cup\{i\}\subseteq L\subseteq B}}m^v(L)\geqslant 0$ and $\sum_{\substack{A\subseteq L\subseteq B\setminus\{i\}}}m^v(L)\geqslant 0$, each of which can inductively be obtained by combining type \ref{eq:kl_trivial} and type \ref{eq:kl_nontrivial} inequalities.
\end{proof}

\subsection{The trivial case: $l=k$}
For $l=k$ inequalities (\ref{eq:kl_nontrivial}) reduce to the form of (\ref{eq:kl_trivial}) and thus we have the following:

\begin{theorem}\label{thm:k_k_cone}
    The cone $\SM{k}{k}(n)$ is defined by the inequalities $m^v(A)\geqslant 0, \quad \forall A \subseteq N, 2\leqslant|A|\leqslant k.$
\end{theorem}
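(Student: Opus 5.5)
The plan is to specialize Theorem~\ref{thm:klcone} to $l=k$ and check that the two families of inequalities it lists collapse to the claimed system $m^v(A)\geqslant 0$ for $2\leqslant|A|\leqslant k$. Since Theorem~\ref{thm:klcone} already shows that (\ref{eq:kl_trivial}) and (\ref{eq:kl_nontrivial}) define $\SM{k}{l}(n)$ for every $2\leqslant l\leqslant k$, it suffices to compute what these two families become when $l=k$.

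First, the family (\ref{eq:kl_trivial}) with $l=k$ gives exactly $m^v(A')\geqslant 0$ for all $A'\subseteq N$ with $2\leqslant|A'|\leqslant k-1$. Second, in the family (\ref{eq:kl_nontrivial}) with $l=k$, the summation range $0\leqslant|T|\leqslant k-l=0$ forces $T=\emptyset$. The inequality for the pair $(A',B')$ therefore reads $m^v(A')\geqslant 0$ with $|A'|=k$, independently of $B'$. As $A'$ ranges over $\binom{N}{k}$, this yields $m^v(A)\geqslant 0$ for all $|A|=k$.

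The union of the two families is precisely $\{m^v(A)\geqslant 0 : 2\leqslant|A|\leqslant k\}$. The cone they define equals $\SM{k}{k}(n)$ by Theorem~\ref{thm:klcone}, which proves the statement.

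As a sanity check on the containment direction, each listed inequality is an instance of (\ref{eq:k-monotone}) with $A=B$. It is therefore valid for every $k$-monotone game, consistent with the remark after (\ref{eq:k-monotone}). There is no real obstacle here; the only point needing care is noticing that the empty-range sum in (\ref{eq:kl_nontrivial}) retains exactly the $T=\emptyset$ term.
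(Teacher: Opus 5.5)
Your proposal is correct and matches the paper's argument. The paper also specializes Theorem~\ref{thm:klcone} to $l=k$: the sum in (\ref{eq:kl_nontrivial}) collapses to the single term $T=\emptyset$, so the two families merge into $m^v(A)\geqslant 0$ for $2\leqslant|A|\leqslant k$.
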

The extreme rays are just the unit vectors since $\SM{k}{k}$ is the non-negative orthant. Also, SMEM, SVAL, and SSEP can be performed efficiently in time linear in the input size.

\subsection{The easy case: $l=k-1$}\label{subsec:k_k-1}
For $k$-additive games, for the case $l=k-1$, inequalities (\ref{eq:kl_nontrivial}) become 
\begin{equation}\label{eq:10a}
m^v(A)+\sum_{i\in T}m^v(A\cup\{i\}) \geqslant 0, \quad A\in\binom{N}{k-1}, ~T\subseteq N\setminus A.
\end{equation}

Thus $\SM{k}{k-1}(n)$ is the product of the following two cones:
\[ \cS^{k-2}(n):=\left\{\left.m^v\in\BR^{\binom{n}{2}}\times \cdots \times \BR^{\binom{n}{k-2}}~\right|~ m^v(A)\geqslant 0, \quad \forall A\subseteq N, 2\leqslant |A|\leqslant k-2\right\},\] 

\[\cT^{k-2}(n):=\left\{m^v\in \BR^{\binom{n}{k-1}}\times \BR^{\binom{n}{k}}~\left|~m^v(A)+\sum_{i\in T}m^v\left(A\cup\{i\}\right)\geqslant 0,\quad \forall A\in\binom{N}{k-1}, ~T\subseteq N\setminus A\right.\right\}.\]

Notice that $\cS^{k-2}(n)$ is the same as $\SM{k-2}{k-2}(n)$ and so its extreme rays are the unit vectors $1^K$ for $K\subseteq N, |K|\leqslant k-2$, where $1^K$ is the binary vector taking value $1$ as coordinate corresponding to $K$ and $0$ everywhere else. Furthermore if we can characterize the extreme rays of $\cT^{k-2}(n)$ then we have characterized the extreme rays of $\SM{k}{k-1}(n)$ because of the following:

\begin{lemma}\label{lem:cone_product}
    Let $\cC_1\subseteq \BR^{d_1}$ and $\cC_2\subseteq\BR^{d_2}$ be pointed polyhedral cones with the set of extreme rays $C_1$ and $C_2$ respectively. Let $\cC=\cC_1\times \cC_2$ be the Cartesian product of the two cones. Then $\cC$ is a pointed polyhedral cone of dimension $d_1+d_2$ whose set of extreme rays is given by $\displaystyle\left\{(x,y)~\left|~ x\in C_1, y=0 \bigvee x=0, y\in C_2\right.\right\}.$
\end{lemma}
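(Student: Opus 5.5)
The plan is to use the inequality description of a product cone together with the characterization of extreme rays as nonzero points of one-dimensional faces. Write $\cC_1=\{x\mid A_1x\leqslant 0\}$ and $\cC_2=\{y\mid A_2y\leqslant 0\}$. Then $\cC=\{(x,y)\mid A_1x\leqslant 0,\ A_2y\leqslant 0\}$, so $\cC$ is a polyhedral cone.

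\textbf{Pointedness.} If $(x,y)$ and $-(x,y)$ both lie in $\cC$, then $\pm x\in\cC_1$ and $\pm y\in\cC_2$. Pointedness of the factors forces $x=0$ and $y=0$, so $\cC$ is pointed.

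\textbf{Dimension.} The statement says ``of dimension $d_1+d_2$''. This holds when $\cC_1$ and $\cC_2$ are full-dimensional, which is the case in the intended application. In general $\dim\cC=\dim\cC_1+\dim\cC_2$, since the affine hull of a product is the product of the affine hulls. I will note this and read ``dimension'' as the dimension of the ambient space when the factors are full-dimensional.

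\textbf{Candidates are extreme.} Let $x\in C_1$ and suppose $(x,0)=(a,b)+(a',b')$ with both summands in $\cC$. Then $b+b'=0$ with $b,b'\in\cC_2$, and pointedness gives $b=b'=0$. Also $x=a+a'$ with $a,a'\in\cC_1$, and extremality of $x$ in $\cC_1$ makes $a$ and $a'$ nonnegative multiples of $x$. Hence $(x,0)$ spans an extreme ray of $\cC$. The case $(0,y)$ with $y\in C_2$ is symmetric.

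\textbf{No other extreme rays.} Every $(x,y)\in\cC$ decomposes as $(x,y)=(x,0)+(0,y)$, and both summands lie in $\cC$ because $0\in\cC_1$ and $0\in\cC_2$. If $(x,y)$ spans an extreme ray, then both summands must be nonnegative multiples of $(x,y)$. So either $x=0$ or $y=0$. In the first case, $(0,y)$ extreme in $\cC$ implies $y$ is extreme in $\cC_2$, since any decomposition of $y$ in $\cC_2$ lifts to a decomposition of $(0,y)$ in $\cC$. The case $y=0$ is symmetric.

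A pointed polyhedral cone is the conic hull of its extreme rays, which also confirms that these rays generate $\cC$. The only subtle point is the dimension claim, which I handle as described above; everything else is routine.
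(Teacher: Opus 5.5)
Your proof is correct and follows the same route as the paper: in both, the key step is the decomposition $(x,y)=(x,0)+(0,y)$, which rules out extreme rays with both components nonzero. You also spell out the ``easy to see'' direction, where pointedness of the other factor is genuinely needed, and you correctly note that the dimension claim $d_1+d_2$ requires both factors to be full-dimensional, as they are in the paper's application.
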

\begin{proof}
    It is easy to see that $x\in C_1$ if and only if $(x,0)$ is an extreme ray of $\cC$. Similarly, $y\in C_2$ if and only if $(0,y)$ is an extreme ray of $\cC$. To see that every extreme ray of $\cC$ is of this form, suppose that $(x,y)$ is an extreme ray of $\cC$ such that $x\neq 0, y\neq 0$. Then $x\in\cC_1$ and $y\in\cC_2.$ So $(x,0)\in\cC, (0,y)\in\cC$. Therefore $(x,y)$ is obtained as a conic combination of other points of $\cC$ and thus cannot be an extreme ray of $\cC.$
\end{proof}

\begin{theorem}\label{thm:setT}
The set $\cT^{k-2}(n)$ is a pointed polyhedral cone of dimension $d=\binom{n}{k-1}+\binom{n}{k}$ whose extreme rays are:
\begin{itemize}
\item $r^K=1^K$, \hfill $K\subseteq N$, $k-1\leqslant |K|\leqslant k$
\item $\rho^K = -1^K+\sum_{i\in K}1^{K\setminus\{i\}}$,  \hfill $K\in\binom{N}{k}$,
\end{itemize}
where $1^S$ takes value 1 for coordinate corresponding to $m^v(S)$ and 0
otherwise. 
Hence, the number of extremal rays is $d+\binom{n}{k}$.
\end{theorem}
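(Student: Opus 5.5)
The plan is to prove the statement in three steps: first show that $\cT^{k-2}(n)$ is pointed and full-dimensional, then show that the listed vectors generate it, and finally show that none of them is a conic combination of the others. For a pointed cone, a generating set that is minimal in this sense is exactly the set of extreme rays (up to scaling). Throughout I write a vector of $\BR^{\binom{n}{k-1}}\times\BR^{\binom{n}{k}}$ as $(a,b)$, with $a_A=m^v(A)$ for $|A|=k-1$ and $b_K=m^v(K)$ for $|K|=k$. The defining inequalities then read $a_A+\sum_{i\in T}b_{A\cup\{i\}}\geqslant 0$.

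\textbf{Pointed and full-dimensional.} If $(a,b)$ and $-(a,b)$ both lie in $\cT^{k-2}(n)$, the inequalities with $T=\emptyset$ force $a=0$. The inequalities with $T=\{i\}$ then force $b_{A\cup\{i\}}=0$, so the cone is pointed. It contains all unit vectors $1^K$, so its dimension is $d=\binom{n}{k-1}+\binom{n}{k}$. Membership of the listed vectors is a direct check. For $r^K$ all coordinates are nonnegative. For $\rho^K$, the only constraints involving the entry $b_K=-1$ have $A=K\setminus\{i\}$ and $i\in T$, and there the entry $a_{K\setminus\{i\}}=1$ compensates exactly.

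\textbf{Generation (the key step).} Take $(a,b)\in\cT^{k-2}(n)$ and let $\mathcal K^-=\{K: b_K<0\}$. I will write
\[
(a,b)=\sum_{K\in\mathcal K^-}|b_K|\,\rho^K+\sum_{K\notin\mathcal K^-}b_K\, r^K+\sum_{|A|=k-1}c_A\, r^A,
\]
where $c_A=a_A-\sum_{i:\,A\cup\{i\}\in\mathcal K^-}|b_{A\cup\{i\}}|$. The identity holds coordinatewise by construction, so it only remains to check $c_A\geqslant0$. This is exactly the defining inequality for $A$ with the choice $T=\{i\notin A: b_{A\cup\{i\}}<0\}$, which is the most restrictive constraint for that $A$. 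I expect this step to be the heart of the proof, and it comes out cleanly because of that choice of $T$.

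\textbf{Minimality.} The argument uses sign bookkeeping coordinate by coordinate. The only generator with a negative coordinate is $\rho^K$, whose only negative entry is $b_K$.

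Suppose first that $r^A$, with $|A|=k-1$, is a nonnegative combination of the other generators. Every coordinate $b_K$ vanishes, so the coefficient of $r^K$ equals that of $\rho^K$ for each $K$. For $K\supseteq A$, the vector $\rho^K$ also puts a $+1$ at $a_{K\setminus\{j\}}$ for every $j\in A$; such $j$ exists since $|A|=k-1\geqslant1$. These coordinates must vanish, and all contributions there are nonnegative, so the coefficient of $\rho^K$ is $0$. Hence $a_A=0$, a contradiction.

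Suppose next that $r^K$, with $|K|=k$, is such a combination. Every $a$-coordinate vanishes, which forces all $\rho$-coefficients to be $0$. Then only unit vectors other than $1^K$ remain, which is impossible.

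Finally suppose $\rho^K$ is such a combination. Its negative coordinate $b_K$ can only be produced by $\rho^K$ itself, which is excluded.

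Counting the generators gives $\binom{n}{k-1}+\binom{n}{k}+\binom{n}{k}=d+\binom{n}{k}$ extreme rays.
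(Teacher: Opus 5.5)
Your proof is correct, but it takes a different route from the paper on both halves of the argument.

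For extremality, the paper works on the inequality side. For each listed vector it exhibits tight inequalities of $\cT^{k-2}(n)$ of rank $d-1$. You work on the generator side, using the fact that in a pointed cone the irredundant elements of a finite generating set are exactly the extreme rays. Your sign bookkeeping is sound. In particular, the observation that $\rho^K$ with $K\supseteq A$ also puts a positive entry in a coordinate $a_{K\setminus\{j\}}$ with $j\in A$, which must vanish, is exactly what rules out $r^A$.

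For generation, the paper uses a Farkas alternative. It then asserts that the multipliers are forced to be $y_S=-1$ up to scaling, and checks only that one certificate via the ordering $T=\{i>\max S\}$. That assertion is not justified. After a sign change, the certificates form the whole cone $\{u\geqslant 0 : u_K\leqslant\sum_{i\in K}u_{K\setminus\{i\}} \text{ for all } |K|=k\}$. For instance, $u=1^A$ for a single $(k-1)$-set $A$ is a valid certificate not proportional to the all-ones vector. So one needs $w^\intercal u\geqslant 0$ for every such $u$. Verifying this for an arbitrary $u$ comes down to the same choice $T=\{i\notin A : b_{A\cup\{i\}}<0\}$ that drives your decomposition.

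Your explicit decomposition avoids this issue entirely. It is constructive, giving a canonical conic decomposition of any point of the cone, and it is shorter. In return, the paper's tight-constraint approach shows which defining inequalities are tight at each extreme ray. Your argument does not provide that, but the theorem does not need it.
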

\begin{proof}
    It is easy to see that $\cT^{k-2}(n)$ is a pointed cone. Since $r^K$ is feasible for each $K\subseteq N, k-1\leqslant|K| \leqslant k$, the dimension follows.
    
    Let $K\subseteq N$, $k-1\leqslant|K|<k$. We claim that the set of tight inequalities of $\cT^{k-2}(n)$ that correspond to $r^K$ are all the inequalities that do not contain $m^v(K)$. We will argue that $m^v(L)=0$ is implied for all $L\neq K$ by these tight inequalities and thus only $m^v(K)$ remains undetermined. So the system has rank $d-1$ and thus, $r^K$ is an extreme ray. To this end, let $L\neq K$. We identify the following cases:
    \begin{itemize}
        \item Suppose $|L|=k-1$.\\ Then $m^v(L)=0$ is obtained by taking $A=L$ and $T=\emptyset.$ 
        \item Suppose $|L|=k, |K|=k-1, K\subseteq L$.\\
        Then, take $i\in L\setminus K$ and pick $j\in L\setminus K$. Taking $A=(K\setminus\{j\})\cup\{i\}$ and $T=\emptyset$ we get $m^v((K\setminus\{j\})\cup\{i\})=0$. Taking $A=(K\setminus\{j\})\cup\{i\}$ and $T=\{j\}$ we get $m^v((K\setminus\{j\})\cup\{i\})+m^v(L)=0$. Thus $m^v(L)=0.$
        \item Suppose $|L|=k, K\nsubseteq L$.\\ Then let $i\in L\setminus K$ be arbitrary. Then $m^v(L)=0$ is implied by $m^v(L\setminus\{i\})=0$, and $m^v(L\setminus\{i\})+m^v(L)=0$. The former tight inequality is obtained by taking $A=L\setminus\{i\}, T=\emptyset$ and the latter by taking $A=L\setminus\{i\}, T=\{i\}.$ 
    \end{itemize}

    Now, let $K\in\binom{N}{k}$. We will prove that $\rho^K$ is an extreme ray. Consider the following set of tight inequalities:
    
    \[\begin{aligned}
        m^v(L)=0, & \quad |L|=k-1, L\nsubseteq K\\
        m^v(L\setminus\{i\})+m^v(L)=0, & \quad |L|=k, L\neq K, i\in L\setminus K, |L\setminus K|>1\\
        m^v(L\setminus\{i\}) + m^v(L) = 0 & \quad |L|=k, L\neq K, i\in L\cap K, |L\setminus K|=1.
    \end{aligned}\]
The first set of equations comes from (\ref{eq:10a}) with $A=L$ and $T=\varnothing$, and determines $m^v(L)=0$ for all $L\subseteq N$ with $|L|=k-1$ and $L\neq K\setminus \{i\}$ for some $i$. The second set of equations comes from (\ref{eq:10a}) with $A=L\setminus\{i\}$, and $T=\{i\}$. As $L\setminus\{i\}\not\subseteq K$, we get $m^v(L\setminus \{i\})=0$  from above and therefore $m^v(L)=0$ for all $L$ s.t. $|L|=k$, $L\neq K$ and $|L\setminus K|>1$. The third set of equations comes from (\ref{eq:10a}) with $A=L\setminus \{i\}$ and $T=\{i\}$. As $L\setminus\{i\}\not\subseteq K$, we get $m^v(L\setminus \{i\})=0$  and therefore $m^v(L)=0$ for all $L$ s.t. $|L|=k$, $L\neq K$ and $|L\setminus K|=1$. In summary, $m^v(L)=0$ for all $L$ of cardinality $k$ or $k-1$, and different from $K$ and $K\setminus \{i\}$, $i\in K$. The remaining equalities are
    \[
        m^v(K\setminus\{i\})+m^v(K)=0, \quad i \in K.
    \]

    This is a system of $k$ equations with $k+1$ unknowns leaving exactly one variable undetermined. Thus the rank of the whole system of tight inequalities is $d-1$ and $\rho^K$ is the unique solution (up to a scalar multiple). Thus $\rho^K$ is an extreme ray.

    Finally, we will prove that no other extreme rays exist.

    Suppose, on the contrary, that a ray $w$ exists that cannot be written as a conic combination of the above extreme rays. That is, there do not exist nonnegative $\alpha_S$ for $S\subseteq N, k-1\leqslant |S|\leqslant k$ and $\beta_K$ for $K\subseteq N, |K|=k$ such that 
    \[
        \sum_{\substack{S\subseteq N\\ k-1\leqslant |S|\leqslant k}}\alpha_Sr^S+\sum_{\substack{K\subseteq N\\|K|=k}}\beta_K\rho^K=w.
    \]
    This is equivalent to saying that the following system has no solution:
    \[\begin{aligned}
        \alpha_S + \sum_{i\in N\setminus S}\beta_{S\cup\{i\}}  & = w_{S}, \quad S\in\binom{N}{k-1}\\
        \alpha_S-\beta_S & = w_S, \quad S\in\binom{N}{k}\\
        \alpha_S & \geqslant 0, \quad S\subseteq N,k-1\leqslant |S|\leqslant k\\
        \beta_S & \geqslant 0,\quad S\in\binom{N}{k}.
    \end{aligned}\]
    Therefore, by Farkas Lemma, there exist real multipliers $y_S$ for $S\subseteq N, k-1\leqslant |S|\leqslant k$, non-negative multipliers $z_S$ for $S\subseteq N, k-1\leqslant|S|\leqslant k$, and non-negative multipliers $t_S$ for $S\subseteq N, |S|=k$ such that $[y, z, t]^\intercal A=0$ and $[w, 0, 0]^\intercal [y, z, t]>0$ where $A$ is the constraint matrix of the above system interpreted in the form $Ax\geqslant b$, and the vector $[y, z, t]$ gathers all the multipliers. 
    
    The condition $[y, z, t]^\intercal A=0$ is equivalent to 
    \[
    \begin{aligned}
        y_S+z_S=0, & \quad S\subseteq N, k-1\leqslant|S|\leqslant k\\
        \sum_{i\in S} y_{S\setminus\{i\}}-y_S+t_S=0, & \quad S\in\binom{N}{k}
    \end{aligned}
    \]
    Therefore, up to a multiplicative constant, $y_S=-1$ and this implies that 
    \[
    [w,0,0]^\intercal[y,z,t] = w^\intercal y = -\sum_{\substack{S\subseteq N\\k-1\leqslant|S|\leqslant k}}w_S=-\sum_{\substack{S\subseteq N\\ |S|=k-1}}\left(w_S+\sum_{i>\max S}w_{S\cup\{i\}}\right),
    \]
    where $\max S$ is the largest element in $S.$
    
    We claim that the sum on the right hand side is nonpositive, hence no multipliers satisfy the Farkas Lemma, which proves that the original system does have a solution, and therefore no extremal ray is missing. To see this, observe that for any $S\in\binom{N}{k-1}$, $w_S+\sum_{i>\max S}w_{S\cup\{i\}}\geqslant 0$ as $w$ satisfies the inequality $m^v(A)+\sum_{i\in T}m^v(A\cup\{i\}) \geqslant 0$ for $A=S$ and $T=\{i~|~ i > \max(S)\}$ that is valid for $\cT^{k-2}(n).$
\end{proof}

Combining our earlier observation that $\SM{k}{k-1}(n)$ is the product of $\cS^{k-2}(n)$ and $\cT^{k-2}(n),$ and that $\cS^{k-2}(n)$ is the nonnegative orthant, together with Lemma \ref{lem:cone_product}
and Theorem \ref{thm:setT} we get the following:

\begin{corollary}\label{cor:k_k-1_extrays}
The set $\SM{k}{k-1}(n)$ is a pointed polyhedral cone of dimension $d=\binom{n}{2}+\cdots+\binom{n}{k}$ whose extreme rays are:
\begin{itemize}
\item $r^K=1^K$, $K\subseteq N$, $2\leqslant |K|\leqslant k$
\item $\rho^K = -1^K+\sum_{i\in K}1^{K\setminus\{i\}}$,  $K\in\binom{N}{k}$,
\end{itemize}
where $1^S$ takes value 1 for coordinate corresponding to $m^v(S)$ and 0
otherwise. 
Hence, the number of extremal rays is $d+\binom{n}{k}$.
    
\end{corollary}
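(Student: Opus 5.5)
The corollary follows by assembling results already established. I would argue in three steps.

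First, I would justify that $\SM{k}{k-1}(n)$ is the Cartesian product $\cS^{k-2}(n)\times\cT^{k-2}(n)$ in the coordinates $m^v(L)$, $2\leqslant|L|\leqslant k$. By Theorem~\ref{thm:klcone} with $l=k-1$, the defining inequalities are of two kinds.
\begin{itemize}
\item The inequalities (\ref{eq:kl_trivial}), namely $m^v(A')\geqslant 0$ for $2\leqslant|A'|\leqslant k-2$. These involve only the coordinates of size at most $k-2$.
\item The inequalities (\ref{eq:kl_nontrivial}). Since $|T|\leqslant k-l=1$, these reduce to (\ref{eq:10a}) and involve only coordinates of size $k-1$ and $k$.
\end{itemize}
The constraint system therefore splits along disjoint coordinate blocks, which gives exactly the product structure.

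Second, I would apply Lemma~\ref{lem:cone_product}, which requires both factors to be pointed polyhedral cones. $\cS^{k-2}(n)$ is the nonnegative orthant in $\BR^{\binom{n}{2}}\times\cdots\times\BR^{\binom{n}{k-2}}$, so it is pointed and its extreme rays are the unit vectors $1^K$ with $2\leqslant|K|\leqslant k-2$. Theorem~\ref{thm:setT} states that $\cT^{k-2}(n)$ is pointed of full dimension $\binom{n}{k-1}+\binom{n}{k}$, with extreme rays $r^K=1^K$ for $k-1\leqslant|K|\leqslant k$ and $\rho^K$ for $|K|=k$. The lemma then says the extreme rays of the product are the rays of each factor padded by zeros in the other block. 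Padding a unit vector of either block by zeros gives the unit vector $1^K$ in the full space. Padding $\rho^K$ gives the same formula $-1^K+\sum_{i\in K}1^{K\setminus\{i\}}$, because its support lies entirely in sizes $k$ and $k-1$. Together these yield the two families in the statement.

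Third, I would do the counting. The dimension is $(\binom n2+\cdots+\binom n{k-2})+(\binom n{k-1}+\binom nk)=d$. The rays $r^K$ number $d$, the rays $\rho^K$ number $\binom nk$, and the two families are distinct, giving $d+\binom nk$ extreme rays in total.

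The only real subtlety is the first step. The coordinates must match exactly, and one must note that (\ref{eq:kl_trivial}) with $l=k-1$ covers only sizes up to $k-2$. The nonnegativity of the size-$(k-1)$ coefficients is not a separate constraint: it is the case $T=\varnothing$ of (\ref{eq:10a}), and so it is contained in $\cT^{k-2}(n)$. The degenerate case $k=3$, where $\cS^{1}(n)$ is zero-dimensional, is handled trivially since that factor contributes no rays. Beyond this, the proof is a direct invocation of Lemma~\ref{lem:cone_product} and Theorem~\ref{thm:setT}.
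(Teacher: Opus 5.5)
Your proposal is correct and follows the paper's route. The paper also derives the corollary by combining the product decomposition $\SM{k}{k-1}(n)=\cS^{k-2}(n)\times\cT^{k-2}(n)$ with Lemma~\ref{lem:cone_product} and Theorem~\ref{thm:setT}; you additionally spell out the block splitting via Theorem~\ref{thm:klcone}, the degenerate case $k=3$, and the ray count, which the paper leaves implicit.
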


Observe that if $P\subseteq\BR^n$ is the conic hull of $\{v_1,\ldots,v_m\}$ then it it also a projection of a polyhedral cone with at most $m$ facets, namely $\left\{(x,\lambda)\in\BR^n\times\BR^m~\left|~x=\sum_{i=1}^m \lambda_iv_i,\lambda\geqslant 0\right.\right\}.$ Thus we have the following:

\begin{theorem}\label{thm:k_k-1_xc}
\(\xc{\SM{k}{k-1}(n)}=O(n^k)\). 
\end{theorem}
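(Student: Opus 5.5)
The plan is to combine Corollary~\ref{cor:k_k-1_extrays} with the observation stated just before Theorem~\ref{thm:k_k-1_xc}. Corollary~\ref{cor:k_k-1_extrays} shows that $\SM{k}{k-1}(n)$ is a pointed polyhedral cone. It therefore equals the conic hull of its extreme rays, namely the vectors $r^K$ for $2\leqslant |K|\leqslant k$ and $\rho^K$ for $K\in\binom{N}{k}$. We write this generating set as $\{v_1,\ldots,v_m\}$, where
\[
m = d+\binom{n}{k} = \sum_{j=2}^{k}\binom{n}{j}+\binom{n}{k}.
\]

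Next we form the lifted cone
\[
Q=\left\{(m^v,\lambda)\in\BR^{d}\times\BR^{m}~\left|~ m^v=\sum_{i=1}^m\lambda_i v_i,\ \lambda\geqslant 0\right.\right\}.
\]
Its projection onto the $m^v$ coordinates is exactly $\cone{\{v_1,\ldots,v_m\}}=\SM{k}{k-1}(n)$. Since equations do not count toward extension complexity (or, equivalently, each can be eliminated by substitution), the only inequalities are the $m$ nonnegativity constraints $\lambda_i\geqslant 0$. This gives $\xc{\SM{k}{k-1}(n)}\leqslant m$.

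It remains to bound $m$. For fixed $k$ we have $\binom{n}{j}=O(n^j)\leqslant O(n^k)$ for each $j\leqslant k$. The sum has $k$ terms, so $m\leqslant (k+1)\binom{n}{k}\cdot O(1)=O(n^k)$. This yields the claimed bound.

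There is no real obstacle, since the substantive work is the completeness of the list of extreme rays in Theorem~\ref{thm:setT}. Pointedness together with that completeness guarantees that the cone is generated by exactly these rays. The one point to state explicitly is that the coefficients of the formulation are in $\{-1,0,1\}$, so the LP remains of polynomial bit size. This matters for the algorithmic consequences, though not for the extension complexity bound itself.
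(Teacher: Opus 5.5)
Your proposal is correct and follows the paper's argument: the paper likewise takes the complete list of $d+\binom{n}{k}$ extreme rays from Corollary~\ref{cor:k_k-1_extrays}. It then lifts the cone to $\{(x,\lambda) \mid x=\sum_{i=1}^m \lambda_i v_i,\ \lambda\geqslant 0\}$, whose only inequalities are the $m=O(n^k)$ nonnegativity constraints.
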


An algorithmic consequence of the above is that one can check in time polynomial in $n^k$ whether a $k$-additive game on $N=[n]$, given by its M\"obius coefficients, is $(k-1)$-monotone. One can also solve SVAL and SSEP efficiently.

\subsection{The hard case: $l\leqslant k-2$}\label{subsec:k_k-2}
Following the convention of the previous section, we denote the set of $k$-additive $l$-monotone games on $N=[n]$ by $\SM{k}{l}(n).$ We now show that for $2\leqslant l\leqslant k-2$, this cone has a complicated structure. In particular, the generalized correlation polytope $\cor^{k-l}(n-l)$ is hidden in the facial structure of $(-\SM{k}{l}(n))^\polar$ in a way that allows us to propagate oracles for membership, optimization, and valid inequality testing for this cone to that of the generalized correlation polytope. Using simple observations about extension complexity we are able to translate lower bounds for the generalized correlation polytope to that of $\SM{k}{l}(n).$

The geometric core of the argument is a facial realization of the correlation polytope inside the dual. More precisely, we exhibit a face \(F\) of the dual cone \((-\SM{k}{l}(n))^\polar\), choose an affine hyperplane \(H\) such that \(F\cap H\) is bounded and encodes multiple copies of the correlation polytope, and then take a face of \(F\cap H\) that is affinely equivalent to one of those copies. The complexity consequences are then derived by translating the facial embedding into decision problems over the original cone and its dual description. 

\subsubsection*{To the Dual}

By Theorem \ref{thm:klcone}, the cone $\SM{k}{l}(n)$ - the set of $k$-additive $l$-monotone games on $N=[n]$, is given by the inequalities:
\begin{eqnarray*}
    m^v(A) \geqslant 0, & A\subseteq N, 2\leqslant |A|\leqslant l-1\\
    \sum_{t=0}^{k-l}\sum_{T\in \binom{S}{t}} m^v(A\cup T)\geqslant 0, & A\in \binom{N}{l}, S\subseteq N\setminus A
\end{eqnarray*}

By a change of variables and setting $x_K=-m^v(K)$, we get the affinely isomorphic cone $-\SM{k}{l}(n)$ defined by
\begin{eqnarray*}
    x_A \leqslant 0, & \quad A\subseteq N, 2\leqslant |A| \leqslant l-1,\\
    \sum_{t=0}^{k-l} \sum_{T\in\binom{S}{t}} x_{A\cup T}  \leqslant 0, & \quad A\in\binom{N}{l}, S \subseteq N\setminus A.
\end{eqnarray*}

Then the polar dual $(-\SM{k}{l}(n))^\polar$ is the conic hull of the vectors $w^A\in\BR^{\binom{n}{2}+\cdots+\binom{n}{k}}$ defined for each $A\subseteq N, 2\leqslant|A|\leqslant l-1$ and vectors $w^{A,S}\in\BR^{\binom{n}{2}+\cdots+\binom{n}{k}}$ defined for each  $A\in\binom{N}{l}, S\subseteq N\setminus A$ as follows:
\begin{itemize}
    \item $w^A_K=\begin{cases}
        1, & \text{ if }K=A\\
        0, & \text{ otw }
    \end{cases}$
    \item $w^{A,S}_K=\begin{cases}
        1, & \text{ if } K=A\cup T, T\subseteq S, |T|\leqslant k-l \\
        0, & \text{ otw }
    \end{cases}$
\end{itemize}

The next result follows easily from Theorem \ref{thm:prob_equiv}.

\begin{lemma}\label{lem:2polar}
    If SMEM for $\SM{k}{l}(n)$ is in NP then so is SVAL for $(-\SM{k}{l}(n))^\polar.$ If SMEM for $(-\SM{k}{l}(n))^\polar$ is NP-complete then so is SVAL for $\SM{k}{l}(n)).$
\end{lemma}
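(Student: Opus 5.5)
The plan is to reduce both claims to the polar correspondence of Theorem \ref{thm:prob_equiv}, applied to the cone $P:=-\SM{k}{l}(n)$. Everything then comes down to two elementary facts: the change of variables $x=-m^v$ is a linear bijection, and for a cone the polar inequality $y^\intercal x\leqslant 1$ on all of $P$ is the same as $y^\intercal x\leqslant 0$. The first fact means membership in $\SM{k}{l}(n)$ and membership in $P$ are the same problem up to negating the input. The second means the valid inequalities of $P$ are exactly the homogeneous ones $\alpha^\intercal x\leqslant 0$, and these correspond to points $\alpha$ of $P^\polar$. Since $P$ is closed, convex and contains the origin, $(P^\polar)^\polar=P$, so the roles of $P$ and $P^\polar$ can be exchanged.

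For the first statement, I would show directly that SMEM for $P$ and SVAL for $P^\polar$ are the same problem. Given $y$, deciding $y\in P=(P^\polar)^\polar$ amounts to deciding whether $y^\intercal x\leqslant 1$ holds for all $x\in P^\polar$. Because $P^\polar$ is a cone, this is equivalent to $y^\intercal x\leqslant 0$ on $P^\polar$, which is an SVAL instance for $P^\polar$. An instance $(\alpha,\beta)$ of SVAL over $P^\polar$ can be simplified in the same way: if $\beta<0$ the answer is NO, since $0\in P^\polar$. If $\beta\geqslant 0$, the inequality is valid iff $\alpha^\intercal x\leqslant 0$ is valid, iff $\alpha\in P$, iff $-\alpha\in\SM{k}{l}(n)$. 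So an NP certificate for SMEM of $\SM{k}{l}(n)$ at $-\alpha$ is an NP certificate for SVAL of $P^\polar$. The reduction is a trivial negation plus a sign check on $\beta$, so it is clearly polynomial.

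For the second statement, I would argue symmetrically. SVAL for $\SM{k}{l}(n)$ with input $(\alpha,\beta)$ is answered NO if $\beta<0$. Otherwise it reduces to validity of $\alpha^\intercal m\leqslant 0$ on $\SM{k}{l}(n)$, which is validity of $-\alpha^\intercal x\leqslant 0$ on $P$, which is $-\alpha\in P^\polar$. Conversely, SMEM for $P^\polar$ at $y$ is SVAL for $\SM{k}{l}(n)$ with the inequality $-y^\intercal m\leqslant 0$. These two Karp reductions run in both directions and are polynomial. Hence NP-hardness of SMEM for $P^\polar$ transfers to SVAL for $\SM{k}{l}(n)$, and NP membership transfers back the other way.

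The main obstacle is membership in NP of SVAL for $\SM{k}{l}(n)$, which the back-transfer needs in order to give full NP-completeness. This holds because $P^\polar$ is explicitly the conic hull of the vectors $w^A$ and $w^{A,S}$. By Carathéodory, a point of $P^\polar$ is a conic combination of at most $d=\binom n2+\cdots+\binom nk$ of these generators, each of polynomial size. Such a list, together with rational multipliers obtained by solving a polynomial-size LP, is a polynomial certificate. Since all polyhedra here are well-described, bit sizes are controlled, consistent with the convention adopted after Theorem \ref{thm:prob_equiv}.
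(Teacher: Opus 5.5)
Your proposal is correct and takes the same approach as the paper, which justifies the lemma only by citing Theorem~\ref{thm:prob_equiv} (SMEM over $P$ is SVAL over $P^\polar$) with $P=-\SM{k}{l}(n)$; you spell this correspondence out as explicit Karp reductions, correctly handling the sign change $x=-m^v$ and the reduction of an inhomogeneous inequality on a cone to the case $\beta\geqslant 0$ (rejecting when $\beta<0$). Your final Carath\'eodory paragraph is valid but unnecessary: membership of SMEM over $(-\SM{k}{l}(n))^\polar$ in NP is already part of the NP-completeness hypothesis, and your reduction of SVAL over $\SM{k}{l}(n)$ to that problem transfers it directly.
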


\subsubsection*{A face of the dual}
\begin{lemma}
    Let $\displaystyle \cA\cC^k_l(n)=(-\SM{k}{l}(n))^\polar\bigcap_{\substack{K\subseteq N\\ 2\leqslant|K|\leqslant l-1}}\left\{\left.x\in\BR^{\binom{n}{2}+\cdots+\binom{n}{k}}~\right|~x_K=0\right\}.$ Then $\cA\cC^k_l(n)$ is the conic hull of the following vectors $w^{A,S}\in\BR^{\binom{n}{2}+\cdots+\binom{n}{k}}$ for each $A\in\binom{N}{l}, S\subseteq N\setminus A$:
    \begin{itemize}
        \item $w^{A,S}_K=\begin{cases}
            1, & \text{ if } K=A\cup T, T\subseteq S, |T|\leqslant k-l \\
            0, & \text{ otw }
        \end{cases}$
    \end{itemize}
\end{lemma}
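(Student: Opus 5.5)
The plan is to use the fact that intersecting a pointed polyhedral cone with a face-defining hyperplane gives the conic hull of exactly those generators that lie on the hyperplane. We know $(-\SM{k}{l}(n))^\polar=\cone{\{w^A\}\cup\{w^{A,S}\}}$. Every generator has nonnegative coordinates, so every point of the dual cone is a nonnegative vector.

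Fix $K$ with $2\leqslant|K|\leqslant l-1$. The linear functional $x\mapsto x_K$ is therefore nonnegative on the whole dual cone, i.e.\ $x_K\geqslant 0$ is a valid inequality for it. Hence $\{x_K=0\}$ cuts out a face. Intersecting over all such $K$ amounts to imposing the single valid inequality $\sum_K x_K\geqslant 0$ with equality, so $\cA\cC^k_l(n)$ is itself a face of $(-\SM{k}{l}(n))^\polar$.

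Next I would prove the general fact directly. Suppose $x=\sum_j\lambda_j g_j$ with $\lambda_j\geqslant 0$ and every generator $g_j\geqslant 0$ coordinatewise. If $x_K=0$ for all $K$ in our index set, then each term $\lambda_j (g_j)_K$ is nonnegative and they sum to zero, so every $\lambda_j$ with $(g_j)_K>0$ must vanish. This shows that $\cA\cC^k_l(n)$ is the conic hull of the generators having zero in every coordinate $K$ with $2\leqslant|K|\leqslant l-1$. The reverse inclusion is immediate, since those generators lie in both sets and both sets are convex cones.

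It remains to identify which generators survive. Each $w^A$ with $2\leqslant|A|\leqslant l-1$ has $w^A_A=1$, so all of them are removed. Each $w^{A,S}$ has support on sets $A\cup T$ with $|A|=l$, hence of size at least $l$, so it vanishes on all coordinates of size at most $l-1$ and survives. The surviving set is thus exactly $\{w^{A,S}\}$, which is the claimed description.

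There is no real obstacle. The only point needing care is the bookkeeping: the support of $w^{A,S}$ must avoid the small coordinates, which holds because $A\subseteq A\cup T$ and $|A|=l$.
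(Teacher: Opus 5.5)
Your proposal is correct and follows essentially the same route as the paper. Both arguments use that $x_K\geqslant 0$ is valid on the dual cone, so imposing $x_K=0$ for all $2\leqslant|K|\leqslant l-1$ cuts out a face generated exactly by the generators vanishing on those coordinates, namely the $w^{A,S}$. You spell out the face-generation step via the coordinatewise-nonnegativity argument, and you work with generators rather than extreme rays, so you never need to know which $w^{A,S}$ are actually extreme.
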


\begin{proof}
    $x_K\geqslant 0$ is valid for $(-\SM{k}{l}(n))^\polar$ for every $K\subseteq N, 2\leqslant |K|\leqslant k$. So $\cA\cC^k_l(n)$ is the face containing exactly those extreme rays that do not have $w_K=1$ for any $K\subseteq N, 2\leqslant |K|\leqslant l-1.$
\end{proof}

\begin{lemma}\label{lem:polar2face}
    If SVAL for $(-\SM{k}{l})^\polar$ is in NP then so is SVAL for $\cA\cC^k_l(n).$ If SMEM for $\cA\cC^k_l(n)$ is NP-complete then so is SMEM for $(-\SM{k}{l}(n))^\polar.$    
\end{lemma}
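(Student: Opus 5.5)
The plan is to use the fact that $\cA\cC^k_l(n)$ is a face of $P:=(-\SM{k}{l}(n))^\polar$. It is cut out by the valid inequalities $x_K\geqslant 0$, $2\leqslant|K|\leqslant l-1$. Equivalently, it is the face where the single valid inequality $\sum_{2\leqslant|K|\leqslant l-1}x_K\geqslant 0$ is tight. Both implications then reduce to two elementary facts about a face $F=P\cap\{c^\intercal x=0\}$, where $c^\intercal x\geqslant 0$ is valid for $P$ and $c$ has polynomial size (here $c$ is a $0/1$ vector). Fact (a): a point $y$ lies in $F$ if and only if $y\in P$ and $c^\intercal y=0$. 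Fact (b): an inequality $\alpha^\intercal x\leqslant\beta$ is valid for $F$ if and only if $(\alpha-\lambda c)^\intercal x\leqslant\beta$ is valid for $P$ for some $\lambda\geqslant 0$.

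For the second statement I reduce SMEM for $\cA\cC^k_l(n)$ to SMEM for $P$ using fact (a). Given $y$, I first check $y_K=0$ for all $2\leqslant|K|\leqslant l-1$ in linear time, reject if this fails, and otherwise query membership of $y$ in $P$. This is a polynomial (Karp) reduction, so NP-hardness transfers from $\cA\cC^k_l(n)$ to $P$. For NP membership of SMEM over $P$, the certificate is a conic combination of the generators $w^A,w^{A,S}$. By Carath\'eodory at most $d=\binom n2+\cdots+\binom nk$ generators are needed, each encoded by $(A,S)$ in polynomial size, and the coefficients can be taken of polynomial bit size as a basic solution of a linear system. Thus SMEM for $P$ is NP-complete.

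For the first statement I take an NP certificate for validity of $\alpha^\intercal x\leqslant\beta$ over $\cA\cC^k_l(n)$ and build it from a certificate over $P$, via fact (b). Since $P$ is a cone, I can take $\beta\geqslant 0$ and need only validity of $\alpha^\intercal x\leqslant 0$ on the generators. The key step is to show that a single polynomially bounded $\lambda$ suffices. Validity on $F$ means $\alpha^\intercal w^{A,S}\leqslant 0$ for all generators of $F$. Each of the remaining generators $w^{A'}$, and each $w^{A,S}$ containing a coordinate of size at most $l-1$, has $c^\intercal w\geqslant 1$. Choosing $\lambda\geqslant\max_K|\alpha_K|\cdot d$ therefore makes $(\alpha-\lambda c)^\intercal w\leqslant 0$ for those generators, because $\alpha^\intercal w\leqslant\|\alpha\|_1$ for $0/1$ vectors $w$. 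Actually, because $A\in\binom Nl$ has size $l$ and $T\subseteq S$, the generators $w^{A,S}$ never contain sets of size at most $l-1$. So the only generators outside $F$ are the $w^{A'}$, which are unit vectors, and $\lambda=\max_K|\alpha_K|$ suffices. The certificate for $(\alpha-\lambda c)^\intercal x\leqslant\beta$ over $P$ is then, verbatim, a certificate for $\alpha^\intercal x\leqslant\beta$ over $F$. The only real obstacle is making sure that this bound on $\lambda$ keeps the encoding polynomial, and the unit-vector observation settles it.
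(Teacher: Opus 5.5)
Your proof is correct and follows the same route as the paper. The second part is the paper's restriction-plus-Carath\'eodory argument. In the first part, the paper likewise modifies the coefficients on the coordinates $x_K$, $2\leqslant|K|\leqslant l-1$, so that the unit-vector generators $w^{A'}$ outside the face satisfy the inequality, while the generators $w^{A,S}$ are unaffected. Your choice $\lambda=\max_K|\alpha_K|$ does this more carefully than the paper's replacement coefficient $-\sum_{|L|\geqslant l}\alpha_L$. That coefficient is positive whenever $\sum_{|L|\geqslant l}\alpha_L<0$, and then it breaks validity on the $w^{A'}$ (for example when $\alpha\leqslant 0$).
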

\begin{proof}
    Suppose that SVAL for $(-\SM{k}{l}(n))^\polar$ is in NP. Given an inequality $\alpha^\intercal x\leqslant 0$ if it is valid for $\cA\cC^k_l(n)$ we want to certify it using a polynomial time verifiable certificate. Consider the inequality $\alpha'^\intercal x\leqslant 0$ with 
    $$\alpha'_K=\begin{cases}
        \alpha_K, & \quad |K|\geqslant l,\\
        -\displaystyle\sum_{|L|\geqslant l}\alpha_L & \quad 2\leqslant |K| \leqslant l-1
    \end{cases}$$

    Consider any extreme ray $w$ of $(-\SM{k}{l}(n))^\polar$. Without loss of generality we assume that $w_K\in\{0,1\}$ for all $K\subseteq N.$ If $w$ does not lie in the face $\cA\cC^k_l(n)$ then there exists $K\subseteq N, 2\leqslant |K| \leqslant l-1$ such that $w_K=1.$ So $\alpha'^\intercal w \leqslant 0.$ If $w$ is in the face $\cA\cC^k_l(n)$ then $\alpha^\intercal w=\alpha'^\intercal w.$ Thus $\alpha'^\intercal x\leqslant 0$ is valid for ${(-\SM{k}{l}(n))^\polar}$ if and only if $\alpha^\intercal x\leqslant 0$ is valid for ${\cA\cC^k_l(n)}.$ Note that the input size of $\alpha'$ is at most size of $\alpha$ plus $\binom{n}{l}+\cdots+\binom{n}{k}$ and so one can verify, in polynomial time, that $\alpha'$ is indeed produced from $\alpha$. This check together with a certificate of validity of $\alpha'^\intercal x\leqslant 0$ for $(-\SM{k}{l}(n))^\polar$ gives a certificate for validity of $\alpha^\intercal x\leqslant 0$ for $\cA\cC^k_l(n).$

    Membership in $\cA\cC^k_l(n)$ is equivalent to membership in $(-\SM{k}{l}(n))^\polar$ together with satisfying $x_K=0$ for all $K\subseteq N, 2\leqslant|K|\leqslant l-1$. Thus given a polynomial time algorithm for SMEM over $(-\SM{k}{l}(n))^\polar$ one can solve SMEM over $\cA\cC^k_l(n)$ in polynomial time. Thus SMEM over $(-\SM{k}{l}(n))^\polar$ is NP-hard if SMEM over $\cA\cC^k_l(n)$ is NP-complete.

    To see that SMEM over $(-\SM{k}{l}(n))^\polar$ is in NP, if $x\in(-\SM{k}{l}(n))^\polar$ then by Caratheodory theorem there exist polynomially many extreme points of $(-\SM{k}{l}(n))^\polar$ whose conic hull contains $x$. Given those extreme rays this is a linear programming task and thus polynomial time solvable. Thus membership in $(-\SM{k}{l}(n))^\polar$ can be certified with a polynomially checkable proof.
\end{proof}

\subsubsection*{Slicing the face to make a polytope}
The cone $\cA\cC^k_l(n)$ is embedded in $\BR^{\binom{n}{2}+\cdots+\binom{n}{k}}$, however for every coordinate corresponding to a subset $K$ with $2\leqslant |K| \leqslant l-1$ every point in $\cA\cC^k_l(n)$ has value zero. So from now on we will identify $\cA\cC^k_l(n)$ with the affinely equivalent cone in $\BR^{\binom{n}{l}+\cdots+\binom{n}{k}}$ obtained by dropping these ``zero-coordinates''.

\begin{lemma}
Let $\displaystyle\cB\cC^k_l(n)=\cA\cC^k_l(n)\bigcap\left\{x\in\BR^{\binom{n}{l}+\cdots+\binom{n}{k}}~\left|~\sum_{K\in \binom{N}{l}}x_K=1\right.\right\}$. Then $\cB\cC^k_l(n)$ is the convex hull of the following vectors $w^{A,S}\in\BR^{\binom{n}{l}+\cdots+\binom{n}{k}}$ for each $A\in\binom{N}{l}, S\subseteq N\setminus A$:
    \begin{itemize}
        \item $w^{A,S}_K=\begin{cases}
            1, & \text{ for } K=A\cup T, ~ T\subseteq S, ~|T|\leqslant k-l \\
            0, & \text{ otw }
        \end{cases}$ 
    \end{itemize}
\end{lemma}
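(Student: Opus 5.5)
The plan is to use the standard fact that slicing a pointed cone by a hyperplane on which every nonzero point of the cone takes a positive value yields a polytope. Its vertices are the intersections of the extreme rays with that hyperplane.

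By the preceding lemma, $\cA\cC^k_l(n)$ is the conic hull of the vectors $w^{A,S}$ with $A\in\binom{N}{l}$ and $S\subseteq N\setminus A$, now viewed in $\BR^{\binom{n}{l}+\cdots+\binom{n}{k}}$ after dropping the zero coordinates. Write $f(x)=\sum_{K\in\binom{N}{l}}x_K$. The first step is to evaluate $f$ on each generator. For $w^{A,S}$, the only coordinate of size exactly $l$ equal to $1$ is $K=A$, which corresponds to $T=\emptyset$; every other $K=A\cup T$ with $T\neq\emptyset$ has $|K|>l$. Hence $f(w^{A,S})=1$ for every generator.

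Next, every $x\in\cA\cC^k_l(n)$ can be written as $x=\sum_{A,S}\lambda_{A,S}w^{A,S}$ with $\lambda\geqslant 0$. Then $f(x)=\sum\lambda_{A,S}$, by linearity and the first step. So $x$ lies in the hyperplane $f=1$ if and only if the multipliers sum to $1$, that is, if and only if $x$ is a convex combination of the $w^{A,S}$. This gives both inclusions: every such convex combination lies in the cone and satisfies $f=1$, and conversely every point of $\cB\cC^k_l(n)$ is such a convex combination. Therefore $\cB\cC^k_l(n)=\conv{\{w^{A,S}\}}$, and it is bounded.

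There is essentially no obstacle here. The one point needing care is that $f$ equals exactly $1$ on every generator, not merely something positive. This requires checking that no $w^{A,S}$ has a second size-$l$ coordinate equal to $1$, and that holds because $T\subseteq S$ is disjoint from $A$. If some generators had other positive values of $f$, one would instead rescale them, and the convex hull would be taken over the rescaled vectors.
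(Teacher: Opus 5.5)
Your proof is correct and takes the same approach as the paper: slice the cone $\cA\cC^k_l(n)$ with a hyperplane on which every generator is positive, and read off the convex hull. You are more explicit than the paper in checking that $\sum_{K\in\binom{N}{l}} w^{A,S}_K = 1$ exactly (since $T\subseteq S$ is disjoint from $A$), and this check is what shows that the slice points are the vectors $w^{A,S}$ themselves rather than rescalings of them.
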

\begin{proof}
    $\sum_{K\in \binom{N}{l}}x_K\geqslant 0$ is valid for $\cA\cC^k_l(n)$ with every extreme ray of the cone satisfying the inequality strictly. So $\sum_{K\in \binom{N}{l}}x_K=1$ intersects every extreme ray and these intersections define the vertices of the resulting polytope.
\end{proof}

\begin{lemma}\label{lem:face2polytope}
    If SVAL over $\cA\cC^k_l(n)$ is in NP then so is SVAL over $\cB\cC^k_l(n).$ If SMEM over $\cB\cC^k_l(n)$ is NP-complete then so is SMEM over $\cA\cC^k_l(n).$
\end{lemma}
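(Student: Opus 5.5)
The plan follows the proof of Lemma \ref{lem:polar2face}, adapting each of its two parts to the slice. Write $H$ for the hyperplane $\{x \mid h^\intercal x = 1\}$ with $h^\intercal x = \sum_{K\in\binom{N}{l}} x_K$. The key fact, from the previous lemma, is that every generator $w^{A,S}$ of $\cA\cC^k_l(n)$ satisfies $h^\intercal w^{A,S}=1$. Hence $\cA\cC^k_l(n)=\cone{\cB\cC^k_l(n)}$, and every nonzero point of the cone is a positive multiple of a point of the slice.

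\textbf{First claim (SVAL).} An inequality $\alpha^\intercal x\leqslant\beta$ over $\cB\cC^k_l(n)$ is homogenized to $(\alpha-\beta h)^\intercal x\leqslant 0$ over $\cA\cC^k_l(n)$. I will check that the two are equivalent by evaluating on the generators.
\begin{itemize}
\item Each generator satisfies $(\alpha-\beta h)^\intercal w^{A,S}=\alpha^\intercal w^{A,S}-\beta$.
\item Validity over the polytope is the same as validity on its vertices, which are exactly the $w^{A,S}$.
\item Validity of a homogeneous inequality over the cone is the same as validity on its generators.
\end{itemize}
Computing $\alpha':=\alpha-\beta h$ takes polynomial time, and its size is bounded by the size of $\alpha,\beta$ plus $\binom{n}{l}$ entries. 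So the certificate for the polytope consists of $\alpha'$ together with an NP certificate for validity of $\alpha'^\intercal x\leqslant 0$ over $\cA\cC^k_l(n)$, and the verifier first checks that $\alpha'$ was formed correctly.

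\textbf{Second claim (SMEM), hardness.} A polynomial-time membership oracle for $\cA\cC^k_l(n)$ gives one for $\cB\cC^k_l(n)$: a point $y$ lies in the slice if and only if $h^\intercal y=1$ and $y\in\cA\cC^k_l(n)$. This is a trivial Karp reduction, so NP-hardness of SMEM transfers from the slice to the cone.

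\textbf{Second claim (SMEM), membership in NP.} Here I reuse the Carathéodory argument from Lemma \ref{lem:polar2face}. If $x\in\cA\cC^k_l(n)$, then $x$ is a conic combination of at most $d=\binom{n}{l}+\cdots+\binom{n}{k}$ generators $w^{A,S}$. Each generator is encoded by the pair $(A,S)$, which has polynomial size, so this list is a polynomial certificate. Verifying it means checking feasibility of a polynomial-size linear program, which takes polynomial time.

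The only mildly delicate point is the encoding size of the certificates and of $\alpha'$, and that is straightforward. I expect no real obstacle.
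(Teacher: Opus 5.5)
Your proposal is correct and follows essentially the same route as the paper: you homogenize $\alpha^\intercal x\leqslant\beta$ to $(\alpha-\beta h)^\intercal x\leqslant 0$ using $h^\intercal w^{A,S}=1$ for every generator, you reduce membership in the slice to membership in the cone plus the check $h^\intercal y=1$, and you certify membership in the cone via Carath\'eodory. Your homogenized inequality is in fact the intended one; the paper's displayed inequality $\sum_{K\notin\binom{N}{l}}x_K-\beta\sum_{K\in\binom{N}{l}}x_K\leqslant 0$ drops the coefficients $\alpha_K$, which is evidently a typo.
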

\begin{proof}
    An inequality $\alpha^\intercal x\leqslant \beta$ is valid for $\cB\cC^k_l(n)$ if and only if the inequality $$\displaystyle\sum_{K\notin\binom{N}{l}} x_K-\beta\sum_{K\in\binom{N}{l}}x_K\leqslant 0$$ is valid for $\cA\cC^k_l(n).$ Thus, a proof of validity of $\displaystyle\sum_{K\notin\binom{N}{l}} x_K-\beta\sum_{K\in\binom{N}{l}}x_K\leqslant 0$ for  $\cA\cC^k_l(n)$ can be used as a proof of validity of $\alpha^\intercal x\leqslant\beta$ for $\cB\cC^k_l(n)$.

    Now, given a point $x$ if we wish to check whether $x\in\cB\cC^k_l(n)$ then we can check that $x \in \cA\cC^k_l(n)$ and $x$ lies on $\sum_{k\in\binom{N}{l}}x_K=1.$ Thus if SMEM over $\cB\cC^k_l(n)$ is NP-complete then SMEM over $\cA\cC^k_l(n)$ is NP-hard.  To see that membership in $\cA\cC^k_l(n)$ is in NP, we again observe that membership of $x$ can be certified using a polynomially many extreme rays of $\cA\cC^k_l(n)$ that contain $x$ in the conic hull - a linear programming problem given the extreme points.
\end{proof}

\subsubsection*{Isolating the correlation polytope}
\begin{lemma}
Let $\displaystyle\cC\cC^k_l(n)=\cB\cC^k_l(n)\bigcap\left\{\left.x\in\BR^{\binom{n}{l}+\cdots+\binom{n}{k}}~\right|~x_K=1, K=\{n-l+1,\ldots,n\}\right\}.$ Then $\cC\cC^k_l(n)$ is affinely isomorphic to the generalized correlation polytope $\cor^{k-l}(n-l).$
\end{lemma}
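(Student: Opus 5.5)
Let $A_0=\{n-l+1,\ldots,n\}$ and $M=N\setminus A_0$, so $|M|=n-l$. The plan has three steps: show that $\cC\cC^k_l(n)$ is a face of $\cB\cC^k_l(n)$, identify which vertices $w^{A,S}$ lie on it, and exhibit a coordinate map carrying those vertices bijectively onto the vertices of $\cor^{k-l}(n-l)$ listed in Observation \ref{obs:gcor}.

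First, the hyperplane $x_{A_0}=1$ cuts out a face. Every vertex $w^{A,S}$ of $\cB\cC^k_l(n)$ has exactly one nonzero coordinate among the $l$-sets, namely $w^{A,S}_A=1$. Hence $x_{A_0}\leqslant 1$ is valid for $\cB\cC^k_l(n)$, being implied by $\sum_{K\in\binom{N}{l}}x_K=1$ and $x\geqslant 0$. Therefore $\cC\cC^k_l(n)$ is the face of $\cB\cC^k_l(n)$ determined by this inequality. It equals the convex hull of the vertices attaining equality, which are exactly the $w^{A_0,S}$ with $S\subseteq M$.

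Next, we define the coordinate projection $\pi$ that keeps only the coordinates $x_{A_0\cup T}$ with $T\subseteq M$ and $1\leqslant|T|\leqslant k-l$, and relabels $x_{A_0\cup T}$ as $y_T$. By definition, $w^{A_0,S}$ is supported on the coordinates $A_0\cup T$ with $T\subseteq S$ and $|T|\leqslant k-l$. So $\pi(w^{A_0,S})_T=1$ exactly when $T\subseteq S$, and is $0$ otherwise. This is precisely the vertex of $\cor^{k-l}(n-l)$ indexed by $S\subseteq M$ in Observation \ref{obs:gcor}, and every such vertex arises this way, so $\pi(\cC\cC^k_l(n))=\cor^{k-l}(n-l)$.

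It remains to check that $\pi$ is injective on the affine hull of $\cC\cC^k_l(n)$, so that $\pi$ is an affine isomorphism. This is the only slightly delicate point. On all of $\cC\cC^k_l(n)$ every coordinate outside the image of $\pi$ is constant. Indeed, $x_{A_0}=1$ on every vertex $w^{A_0,S}$, and $x_K=0$ whenever $K\not\supseteq A_0$ or $|K\setminus A_0|>k-l$, since no $w^{A_0,S}$ is supported there. These constraints hold on every vertex, so they hold on the affine hull, and the inverse map $y\mapsto x$ (setting $x_{A_0}=1$, $x_{A_0\cup T}=y_T$, all other coordinates $0$) is affine. Hence $\cC\cC^k_l(n)$ and $\cor^{k-l}(n-l)$ are affinely isomorphic.
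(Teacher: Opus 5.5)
Your proposal is correct and follows the paper's proof: you identify $\cC\cC^k_l(n)$ as the face of $\cB\cC^k_l(n)$ cut out by the valid inequality $x_{A_0}\leqslant 1$, note that its vertices are exactly the $w^{A_0,S}$ with $S\subseteq M$, and then drop and relabel coordinates to obtain the vertices of $\cor^{k-l}(n-l)$ from Observation~\ref{obs:gcor}. Your extra check that the discarded coordinates are constant on the face fills in a step the paper leaves implicit, namely why the coordinate projection is an affine bijection.
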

\begin{proof}
    The inequality $x_K\leqslant 1$ is valid for $\cB\cC^k_l(n)$ for every $K\subseteq N, l\leqslant |K|\leqslant k.$ Thus, $\cC\cC^k_l(n)$ is a face of $\cB\cC^k_l(n).$ This face contains exactly the vertices $w^{\{n-l+1,\ldots,n\},S}$ of $\cB\cC^k_l(n)$ for $S\subseteq \{1,\ldots,n-l\}.$

    Dropping every coordinate other than $x_{K}$ for $K\supsetneq \{n-l+1,\ldots,n\}$, and denoting the resulting point at $w^S$ we obtain a vertex exactly for each $S\subseteq\{1,\ldots,n-l\}$ with 
    \begin{itemize}
        \item $w^S_{\{n-l+1,\ldots,n\}\cup T}=\begin{cases}
            1, & \text{ if } T\subseteq S, ~ 1\leqslant |T|\leqslant k-l\\
            0, & \text{ otw }
        \end{cases}$.
    \end{itemize}

    Identifying the coordinates $x_{\{n-l+1,\ldots,n\}\cup K}$ with $x_K$ for each $K\subseteq [n-l]$ we see that $w^S$ are precisely the vertices of $\cor^{k-l}(n-l)$ for $S\subseteq [n-l]$ by Observation \ref{obs:gcor}.
\end{proof}

\begin{lemma}\label{lem:faceofpolytope}
    If SVAL over $\cB\cC^k_l(n)$ is in NP then so is SVAL over $\cC\cC^k_l(n).$ If SMEM over $\cC\cC^k_l(n)$ is NP-complete then so is SMEM over $\cB\cC^k_l(n).$
\end{lemma}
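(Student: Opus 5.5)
I will mirror the proofs of Lemma \ref{lem:polar2face} and Lemma \ref{lem:face2polytope}, using that $\cC\cC^k_l(n)$ is the face of $\cB\cC^k_l(n)$ cut out by the valid inequality $x_{K_0}\leqslant 1$, with $K_0=\{n-l+1,\ldots,n\}$. The face contains exactly the vertices $w^{K_0,S}$, $S\subseteq[n-l]$; every other vertex $w^{A,S}$ with $A\neq K_0$ has $x_{K_0}=0$. The only non-routine point is that the face inequality is not tight at distance one, since $x_{K_0}\in\{0,1\}$ at vertices, and this is where the lifted inequality must be built.

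\textbf{Membership.} A point $x$ lies in $\cC\cC^k_l(n)$ if and only if $x\in\cB\cC^k_l(n)$ and $x_{K_0}=1$. The extra check is trivial, so a polynomial algorithm for SMEM over $\cB\cC^k_l(n)$ gives one for $\cC\cC^k_l(n)$. Hence NP-hardness transfers from $\cC\cC^k_l(n)$ to $\cB\cC^k_l(n)$. Membership of $\cB\cC^k_l(n)$ in NP follows as before: by Carathéodory's theorem, polynomially many vertices $w^{A,S}$ together with convex multipliers form a certificate checkable by linear programming.

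\textbf{Validity.} Let $\alpha^\intercal x\leqslant\beta$ be valid for $\cC\cC^k_l(n)$. I will construct a lifted inequality
\[
\alpha^\intercal x - M(1-x_{K_0})\leqslant\beta ,
\]
or equivalently $(\alpha+Me^{K_0})^\intercal x\leqslant \beta+M$. It holds with equality-preserving behaviour on the face, and it is valid on every other vertex provided
\[
M\geqslant \max_{A\neq K_0,S}\alpha^\intercal w^{A,S}-\beta .
\]
Since all vertices are $0/1$ vectors, it suffices to take $M=\sum_K|\alpha_K|+|\beta|$. This $M$ has polynomial bit size and is computable from $(\alpha,\beta)$ in polynomial time.

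The lifted inequality is then valid for $\cB\cC^k_l(n)$ if and only if the original is valid for $\cC\cC^k_l(n)$. The forward direction uses the choice of $M$ on vertices off the face, and the fact that the added term $-M(1-x_{K_0})$ vanishes on the face. The converse holds because restricting to the face, where $x_{K_0}=1$, gives back $\alpha^\intercal x\leqslant\beta$.

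The certificate for validity over $\cC\cC^k_l(n)$ therefore consists of the NP certificate for the lifted inequality over $\cB\cC^k_l(n)$. The verifier recomputes $M$ and the lifted inequality itself.

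The main obstacle is only to make sure that $M$ is polynomially bounded, and the $0/1$ vertex structure settles this immediately.
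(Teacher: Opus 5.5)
Your proof is correct and follows the paper's route. Membership is handled by the extra check $x_{K_0}=1$ plus a Carath\'eodory certificate. Validity is handled by lifting the inequality so that vertices off the face satisfy it automatically while values on the face are unchanged: on $\cB\cC^k_l(n)$ your term $-M(1-x_{K_0})$ equals $-M\sum_{K\in\binom{N}{l}\setminus\{K_0\}}x_K$, and the paper instead overwrites exactly those coefficients.

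Your choice $M=\sum_K|\alpha_K|+|\beta|$ is actually the safer one. The paper's coefficient $\beta-\sum_{L\notin\binom{N}{l}}\alpha_L$ fails when some $\alpha_L<0$: for example, with $\alpha=-e_L$ for some $|L|=l+1$ and $\beta=0$, the vertex $w^{A,\emptyset}$ with $A\neq K_0$ violates the lifted inequality. Your absolute-value bound avoids this.
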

\begin{proof}
    Suppose that SVAL for $\cB\cC^k_l(n)$ is in NP. Given an inequality $\alpha^\intercal x\leqslant\beta$, if it is valid for $\cC^k_l(n)$, we want to certify it using a polynomial time verifiable certificate. Consider the inequality $\alpha'^\intercal x \leqslant \beta$ with
    \[
    \alpha'_K=\begin{cases}
        \alpha_k, & \quad K\notin\binom{N}{l} \vee K=\{n-l+1,\ldots,n\}\\
        \displaystyle-\sum_{L\notin\binom{N}{l}}\alpha_L+\beta, & \quad K\in\binom{N}{l} \wedge K\neq\{n-l+1,\ldots,n\}
    \end{cases}
    \]

    Consider any vertex $v$ of $\cB\cC^k_l(n)$. If $v$ is not a vertex of $\cC\cC^k_l(n)$, then there exists $K\in\binom{N}{l}$ such that $K\neq\{n-l+1,\ldots,n\}$ and $x_K=1$. So $\alpha'^\intercal v\leqslant \beta$. If $v\in\cC\cC^k_l(n)$ then $\alpha'^\intercal v=\alpha^\intercal v\leqslant \beta.$ Thus $\alpha^\intercal x\leqslant \beta$ is valid for $\cC\cC^k_l(n)$ if and only if $\alpha'^\intercal x\leqslant\beta$ is valid for $\cB\cC^k_l(n).$ Thus a certificate for validity of $\alpha'^\intercal x\leqslant\beta$ for $\cB\cC^k_l(n)$ can be used as a certificate for validity of $\alpha^\intercal x\leqslant \beta$ for $\cC\cC^k_l(n).$

    Now suppose SMEM over $\cC\cC^k_l(n)$ is NP-complete. A point $x\in \cC\cC^k_l(n)$ if and only if $x\in\cB\cC^k_l(n)$ and $x_{\{n-l+1,\ldots,n\}}=1.$ The latter check can obviously be done in polynomial time. So SMEM over $\cB\cC^k_l(n)$ is NP-hard. To see that it is in NP we again observe that membership in $\cB\cC^k_l(n)$ be be certified by producing polynomially many vertices of $\cB\cC^k_l(n)$ whose convex hull contains $x.$ 
\end{proof}

\begin{theorem}\label{thm:hardness}
    Let $k,l$ be positive integeral constants such that $2\leqslant l \leqslant k -2$. Then the following holds:
    \begin{itemize}
        \item SMEM over $\SM{k}{l}(n)$ is not in NP unless NP=coNP.
        \item SVAL over $\SM{k}{l}(n)$ is NP-complete.
        \item SSEP over $\SM{k}{l}(n)$ is NP-hard.
        \item $\xc{\SM{k}{l}(n)}\geqslant 1.5^{n-l}$
    \end{itemize}
\end{theorem}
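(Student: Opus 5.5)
The proof chains the reductions already established along the sequence of faces
\[
\SM{k}{l}(n)\;\to\;(-\SM{k}{l}(n))^\polar\;\to\;\cA\cC^k_l(n)\;\to\;\cB\cC^k_l(n)\;\to\;\cC\cC^k_l(n)\cong\cor^{k-l}(n-l),
\]
and then feeds in the known hardness facts for $\cor^{k-l}(n-l)$. Since $l\leqslant k-2$, we have $k-l\geqslant 2$, so the listed results for $\cor^{k'}(m)$ with $k'\geqslant 2$ apply with $m=n-l$. We also need that $\cC\cC^k_l(n)$ is described in polynomial size from $n$ and that an instance for $\cor^{k-l}(n-l)$ maps to one for $\cC\cC^k_l(n)$ in polynomial time. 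This is clear, because the affine isomorphism just pads coordinates with zeros (and a $1$ in coordinate $\{n-l+1,\ldots,n\}$).

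\textbf{SMEM is not in NP.} Suppose SMEM over $\SM{k}{l}(n)$ were in NP. By Lemma \ref{lem:2polar}, SVAL over $(-\SM{k}{l}(n))^\polar$ would be in NP. By Lemma \ref{lem:polar2face}, SVAL over $\cA\cC^k_l(n)$ would be in NP. By Lemma \ref{lem:face2polytope}, SVAL over $\cB\cC^k_l(n)$ would be in NP. By Lemma \ref{lem:faceofpolytope}, SVAL over $\cC\cC^k_l(n)$ would be in NP, and hence so would SVAL over $\cor^{k-l}(n-l)$. That contradicts the known fact unless NP=coNP.

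\textbf{SVAL is NP-complete.} SMEM over $\cor^{k-l}(n-l)$ is NP-complete, hence so is SMEM over $\cC\cC^k_l(n)$. Running the second halves of Lemmas \ref{lem:faceofpolytope}, \ref{lem:face2polytope} and \ref{lem:polar2face} upward gives NP-completeness of SMEM over $(-\SM{k}{l}(n))^\polar$. Lemma \ref{lem:2polar} then yields NP-completeness of SVAL over $\SM{k}{l}(n)$.

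Membership of SVAL in NP should be checked directly rather than only inherited. The inequalities of the cone are $\alpha^\intercal m\geqslant 0$ and $\SM{k}{l}(n)$ is a cone, so $\alpha^\intercal x\leqslant 0$ is valid for $-\SM{k}{l}(n)$ exactly when $\alpha$ lies in the dual cone. By Carath\'eodory this is certified by polynomially many generators $w^A, w^{A,S}$. Each generator is specified by the pair $(A,S)$ with $n$ bits, and the certificate is verified by an LP. I will spell this out.

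\textbf{SSEP is NP-hard.} A separation oracle decides membership, and the cone is well described. Via the optimization–separation equivalence of \cite{GLS1988} (equivalently, Theorem \ref{thm:prob_equiv}), a polynomial separation algorithm gives a polynomial SVAL algorithm. Hence SSEP is NP-hard since SVAL is.

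\textbf{Extension complexity.} I plan to use the standard facts that extension complexity is invariant under affine isomorphism, does not increase when passing to a face or to the intersection with a hyperplane, and that for a cone $C$ one has $\xc{C}=\xc{C^\polar}$ up to an additive constant. The last fact holds because a slack (nonnegative) factorization of $C$ transposes to one of its dual, and $\xc{}$ equals the nonnegative rank of the slack matrix (Yannakakis). Then
\[
\xc{\SM{k}{l}(n)}=\xc{-\SM{k}{l}(n)}\gtrsim\xc{(-\SM{k}{l}(n))^\polar}\geqslant\xc{\cA\cC^k_l(n)}\geqslant\xc{\cB\cC^k_l(n)}\geqslant\xc{\cC\cC^k_l(n)}=\xc{\cor^{k-l}(n-l)}\geqslant 1.5^{n-l}.
\]

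\textbf{Main obstacle.} The delicate step is the polarity step for the cone, since Yannakakis is usually stated for polytopes. I would handle it by slicing the pointed cone with a hyperplane $\sum x_K=1$ so that both sides become polytopes whose slack matrices are transposes of each other (rows indexed by facets, columns by extreme rays). This gives equality of nonnegative ranks up to $\pm1$. Any additive constant slack is absorbed by proving the bound for large $n$, or by noting that the intermediate faces already give at least $1.5^{n-l}$ inequalities.
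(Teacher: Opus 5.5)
Your proposal is correct and follows the paper's proof essentially step for step: the same chain of Lemmas \ref{lem:2polar}--\ref{lem:faceofpolytope} for SMEM and SVAL, the equivalence of \cite{GLS1988} for SSEP, and the polarity/face/slice chain for the extension complexity bound. One remark: your additive-constant hedge is unnecessary, and padding it out by taking $n$ large would not give the exact bound $1.5^{n-l}$ anyway. For a full-dimensional pointed cone the origin lies on every facet, so the constant term in Yannakakis' Farkas argument vanishes and $\xc{C}$ equals exactly the nonnegative rank of the facet-by-ray slack matrix; since the slack matrix of $C^{\polar}$ is its transpose, $\xc{C}=\xc{C^{\polar}}$ exactly, and the bound $1.5^{n-l}$ follows without any correction.
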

\begin{proof}
Suppose NP$\neq$coNP. Then SVAL over $\cor^{k-l}(n-l)$ is not in NP. Since $\cor^{k-l}(n-l)$ is the same as $\cC\cC^k_l(n)$ with a simple embedding in $\BR^{\binom{n}{2}+\cdots+\binom{n}{k}}$ SVAL over $\cC\cC^k_l(n)$ is not in NP. By Lemma \ref{lem:faceofpolytope} SVAL over $\cB\cC^k_l(n)$ is not in NP. Then by Lemma \ref{lem:face2polytope} SVAL over $\cA\cC^k_l(n)$ is not in NP. Then, by Lemma \ref{lem:polar2face}, SVAL over $(-\SM{k}{l}(n))^\polar$ is not in NP. Finally, by Lemma \ref{lem:2polar}, SMEM over $\SM{k}{l}(n)$ is not in NP.

The same chain of implications starting with NP-completeness of SMEM over $\cor^{k-l}(n-l)$ we get that SVAL over $\SM{k}{l}(n)$ is NP-complete. NP-hardness of SSEP follows from polynomial equivalence of SSEP, SVAL, and SMEM together with NP-completeness of SVAL.

For the extension complexity lower bound, note that the extension complexity of a pointed polyhedral cone is the same as that of its polar dual. So $\xc{(-\SM{k}{l}(n))^\polar}=\xc{\SM{k}{l}(n)}.$ Since the extension complexity of a face is upper bounded by the extension complexity of the whole polyhedron \cite{AvisT15}, we have $\xc{(-\SM{k}{l}(n))^\polar}\geqslant\xc{\cA\cC^k_l(n)}$. It is easy to see that slicing a pointed cone with a hyperplane that intersects every extreme ray preserves extension complexity so $\xc{\cA\cC^k_l(n)}=\xc{\cB\cC^k_l(n)}.$ Once again taking a face we obtain $\xc{\cB\cC^k_l(n)}\geqslant\xc{\cC\cC^k_l(n)}.$ Finally since $\cC\cC^k_l(n)$ is affinely isomorphic to $\cor^{k-l}(n-l)$ we get that $\xc{\cC\cC^k_l(n)}\geqslant 1.5^{n-l}.$ Putting it all together we obtain,
\[
\xc{(-\SM{k}{l}(n))^\polar}=\xc{\SM{k}{l}(n)}\geqslant\xc{\cA\cC^k_l(n)}=\xc{\cB\cC^k_l(n)}\geqslant\xc{\cC\cC^k_l(n)}\geqslant 1.5^{n-l}.
\]
\end{proof}

We would like to again remark that no good characterization of the extreme rays of the cone $\SM{k}{l}$ is known for $l\leqslant k-2$. Theorem \ref{thm:hardness} gives a theoretical basis for this phenomenon. By Caratheodory theorem, membership in the $\SM{k}{l}$ can be certified by giving $d$ extreme rays -- where $d$ is the dimension of the cone -- and checking that the given point can be written as a conic combination of these rays. Thus an efficiently checkable characterization of the extreme rays would put membership checking -- SMEM -- over $\SM{k}{l}$ in the class NP and would thus imply NP=coNP.



\section{Discussion and open directions}

Our results resolve the complexity of the three main regimes of bounded-degree
cooperative games and introduce extended formulations as a tool in this
setting.  Several directions remain open.

\begin{enumerate}
\item \textbf{The boundary case $l = k-2$ more precisely.}
      Our lower bound of $1.5^{n-l}$ comes from the correlation polytope.
      Is the extension complexity $\Theta(1.5^n)$, or is it worse?

\item \textbf{Other solution concepts.}
      The facial-reduction strategy used here — realizing a hard polytope in
      the dual and propagating hardness — is general.  It would be interesting
      to apply it to other solution concepts in cooperative games (in particular, the nucleolus and the kernel) and to other bounded-degree
      classes of set functions such as capacities.

\item \textbf{Better Hardness proofs.}
      Our hardness proofs in Theorem \ref{thm:hardness} use ellipsoid algorithm implicitely and thus are weakly polynomial reductions. Can one give a strongly polynomial hardness reduction?
      
\item \textbf{Practical LP formulations.}
      The extended formulation for the core in result~(1) is explicit and of
      size $O(n^k)$.  Evaluating its practical performance for moderate $n$
      and $k$ — compared with direct separation-based approaches — is a
      natural experimental question.
\end{enumerate}

\section*{Acknowledgements}
Generative AI, specifically Perplexity, was used to help with grammar and linguistic style. The technical parts of the paper were drafted entirely by the authors.
\bibliographystyle{plain}
\bibliography{refs}

\end{document}

%% file: config.tex
\usepackage{pgffor}
\foreach \x in {A,...,Z}{%
	\expandafter\xdef\csname b\x\endcsname{\noexpand\ensuremath{\noexpand\mathbf{\x}}}
	\expandafter\xdef\csname c\x\endcsname{\noexpand\ensuremath{\noexpand\mathcal{\x}}}
	\expandafter\xdef\csname B\x\endcsname{\noexpand\ensuremath{\noexpand\mathbb{\x}}}
}

\foreach \x in {a,...,z}{%
	\expandafter\xdef\csname b\x\endcsname{\noexpand\ensuremath{\noexpand\mathbf{\x}}}
}

\newcommand{\conv}[1]{\ensuremath{\mathrm{conv}\left(#1\right)}}
\newcommand{\cone}[1]{\ensuremath{\mathrm{cone}\left(#1\right)}}

\newcommand{\xc}[1]{\ensuremath{\mathrm{xc}\left(#1\right)}}
\newcommand{\core}{\mathrm{Core}}
\newcommand{\cor}{\mathrm{CORR}}

\newcommand{\Oh}[1]{\ensuremath{\cO}\left(#1\right)}
\newcommand{\polar}{\ensuremath{\Delta}}

\newtheorem{problem}{Problem}
\newtheorem*{problem*}{Problem}
\newtheorem{obs}{Observation}

\newcommand{\SM}[2]{\ensuremath{\cM}^{#1}_{#2}}